\documentclass[journal]{IEEEtran}
\usepackage{amsmath, amsfonts,amssymb, bm, amsthm}
\usepackage{algorithmic, algorithm, array}
\usepackage{subfig}
\usepackage{graphicx}
\usepackage{xcolor, balance}

\newtheorem{proposition}{Proposition}

\begin{document}

\title{Pulse Shaping Filter Design for Zak-OTFS}

\author{Kecheng Zhang, Weijie Yuan,\IEEEmembership{~Senior Member,~IEEE}, Yonghui Li\IEEEmembership{~Fellow,~IEEE}
    \thanks{Kecheng Zhang and Weijie Yuan are with the School of System Design and Intelligent Manufacturing and the Shenzhen Key Laboratory of Robotics and Computer Vision, Southern University of Science and Technology, Shenzhen 518055, China (e-mail: zhangkc2022@mail.sustech.edu.cn; yuanwj@sustech.edu.cn).
    }
    \thanks{Yonghui Li is with the School of Electrical and Information Engineering, The University of Sydney, Sydney, NSW 2006, Australia (e-mail: yonghui.li@sydney.edu.au)}
}






\maketitle

\begin{abstract}
    The Zak-transform-based Orthogonal Time Frequency Space (Zak-OTFS), offers a robust framework for high-mobility communications by simplifying the input-output (I/O) relation to a twisted convolution. While this structure theoretically enables accurate channel estimation by sampling the response from one pilot symbol, practical implementation is constrained by the spreading of effective channel response induced by pulse shaping filters. To address this, we first derive the I/O relationship for discrete-time oversampled Zak-OTFS, which closely approximates the continuous-time system and facilitates analysis and numerical simulation. We show that every delay-Doppler domain symbol undergoes the same effective channel response under the discrete oversampled Zak-OTFS. We then analyze the impact of window ambiguity functions, and reveal that high sidelobes lead to wide channel spreading and degrade estimation accuracy. Building on this insight, we propose a novel pulse shaping filter design that synthesizes Prolate Spheroidal Wave Functions (PSWFs) within the Isotropic Orthogonal Transform Algorithm (IOTA) framework. Numerical simulations confirm that the proposed design achieves superior channel estimation accuracy and bit error rate (BER) performance compared to conventional root-raised-cosine and rectangular windowing schemes in the high-SNR regime.
\end{abstract}

\begin{IEEEkeywords}
    Zak-OTFS, Zak-transform, channel estimation, pulse shaping filter design, embedded pilot scheme.
\end{IEEEkeywords}

\section{Introduction}

The demand for highly reliable wireless communication in high-mobility scenarios is rapidly increasing. This requirement is a key expectation for next-generation wireless systems \cite{10969844, lu2024integrated}, which has driven the exploration of modulation schemes capable of operating effectively in doubly dispersive channels \cite{ISAC_review}. Orthogonal time frequency space (OTFS) modulation \cite{Initial_OTFS} has emerged as a promising candidate due to its ability to describe the channel and map information symbols onto the delay-Doppler (DD) domain. By operating in the DD domain, OTFS effectively exploits the full diversity and inherent characteristics of wireless channels, such as sparsity and quasi-static in high-mobility environment \cite{Initial_OTFS}. Extensive research has shown that these properties lead to enhanced communication performance. In the following subsection, we will recap the OTFS literature and summarize the achievements and limits of existing research about OTFS.

\subsection{Literature Review}

In the early stage of OTFS, the two-dimensional (2D) inverse symplectic finite Fourier transform (ISFFT) is implemented to convert the DD domain information symbols into the time-frequency (TF) domain and then transmit the TF domain symbols via the orthogonal frequency-division multiplexing (OFDM) structure \cite{Initial_OTFS}, which is called two-stage implementation multi-carrier OTFS (MC-OTFS) in \cite{Saif_Predictability}. MC-OTFS is highly compatible with OFDM wireless systems and is widely adopted in existing studies.

The following literature review focuses on three primary dimensions of MC-OTFS research: the input-output (I/O)  relationship \cite{OTFS_2018_Viterbo, OTFS_practical_pulse}, channel estimation techniques \cite{OTFS_embedded_pilot}, and performance analysis \cite{OTFS_ISAC_Giuseppe, Fair_Compare_OTFS_OFDM_Giuseppe}. To be specific, the work \cite{OTFS_2018_Viterbo} derived the widely adopted I/O relationship for MC-OTFS modulation, which was further simplified in \cite{OTFS_practical_pulse}, and analyzed the DD domain I/O relation under different pulses. A low-complexity message passing algorithm was also developed in \cite{OTFS_2018_Viterbo}, enabling OTFS systems to achieve error performance close to the ideal case and outperform OFDM under various channel conditions. In \cite{OTFS_embedded_pilot}, a DD-domain channel estimation method employing a high-power embedded pilot with sufficient guard intervals was proposed. Exploiting path separability in the DD domain, \cite{OTFS_embedded_pilot} achieved an accurate estimation by directly comparing the transmit pilot and receive pilot responses. A joint radar and communication system based on OTFS modulation was considered for vehicular scenarios in \cite{OTFS_ISAC_Giuseppe}. The system achieves near-optimal radar accuracy comparable to FMCW waveforms and supports high communication rates. The work \cite{Fair_Compare_OTFS_OFDM_Giuseppe} conducted a fair comparison between MC-OTFS and OFDM modulation schemes in terms of achievable communication rate under practical assumptions. Simulation results showed that MC-OTFS was insensitive to Doppler shifts and achieved a better communication rate. According to the results of these studies, MC-OTFS has the potential to obtain superior communication performance compared to OFDM in high-mobility environments.

Despite these advancements, MC-OTFS faces a fundamental challenge in achieving accurate channel estimation, particularly in the presence of fractional delay-Doppler indices or unresolvable paths. Existing works have extensively studied MC-OTFS channel estimation, such as \cite{OTFS_embedded_pilot, OTFS_off_grid, OTFS_off_grid_FanPingzhi} and the subsequent works inspired by them. However, these work rely on the simplified channel model by assuming an ideal pulse that satisfies bi-orthogonality \cite{Initial_OTFS} but is not physically realizable, or the paths with integer-valued delay and Doppler shifts indices. While such an assumption simplified the analytical process, it created a significant discrepancy with practical scenarios, making these methods and conclusion inapplicable to real-world systems. For instance, in an MC-OTFS system that adopts rectangular pulse shaping, there will be additional phase interference on every received symbol related to the delay and Doppler shifts of channel and grid indices of symbols \cite{OTFS_2018_Viterbo}. Then, the schemes proposed in these studies, which estimate the DD domain effective channel using a single pilot, will lead to a poor channel estimation performance. Instead of estimating the channel through the response of one pilot symbol, some works considered directly estimating the effective channel matrix together with the data symbols, such as \cite{OTFS_off_grid, OTFS_off_grid_FanPingzhi, OTFS_off_grid_ShanYaru}. The work \cite{OTFS_off_grid_ShanYaru} derived a sparse Bayesian inference-based algorithm for MC-OTFS with rectangular pulse, and \cite{OTFS_off_grid, OTFS_off_grid_FanPingzhi} focused on MC-OTFS system with ideal-pulse and \cite{OTFS_off_grid} further assumed the number of paths was known. These works require many iterations to converge, making it hard to meet the low-latency demands of real-time communication \cite{10596930}. As a result, efficiently obtaining accurate channel estimates becomes challenging, which hinders the practical implementation of MC-OTFS.

In recent studies, a new OTFS framework (referred to as Zak-OTFS \cite{Saif_Predictability, Saif_Mathematical_Foundation}) was proposed based on the Zak transform. Zak transform serves as a mathematical framework that reveals the underlying relationships of the same signal among time, frequency, and DD domains \cite{Zak_Transform_Definition}. By describing the time domain signal in DD domain through Zak transform, the procedure of a signal passing a doubly dispersive channel can be represented as a twisted convolution \cite{Saif_Mathematical_Foundation} between the channel response and the DD domain representation of the signal, meaning that every data symbol experience the same channel response. Due to the associative property of twisted convolution, applying the DD domain twisted convolution filtering before and after transmission of the signal will only change the effective channel response; each data symbol still goes through the same response. This makes it possible to estimate the channel accurately by observing the received response of one pilot symbol under the practical pulse, which is called the predictability of Zak-OTFS by \cite{Saif_Predictability}. As discussed in \cite{Saif_Predictability}, perform channel estimation through the response of just one symbol, the mean square error (MSE) between the estimated and actual channel can be lower than $-25$ dB, thereby positioning Zak-OTFS as a promising candidate for robust communication in practical high-mobility scenarios.

The channel predictability property of Zak-OTFS has already attracted attention of researchers and motivated related studies. The work \cite{Hanly_Transmitter} summarized two types of practical transmit frameworks of Zak-OTFS and analyzed the corresponding channel estimation performance. However, the receiver was assumed to be a delta impulse in their work \cite{Hanly_Transmitter}, which is not physically implementable. In their following work \cite{Hanly_OTFS_Receiver}, a complete physically realizable Zak-OTFS framework was proposed. Nevertheless, the framework focused on radar sensing purposes; how to perform robust communication in a high-mobility scenario was not considered. The recent work \cite{Saif_Interleaved_Pilot} introduced an interleaved pilot design for Zak-OTFS to overcome Doppler aliasing without increasing the time duration of signal, which provided robust communication performance when the Doppler shift was larger than the Doppler period in the DD domain.

Zak-OTFS has paved the way for simple, efficient, and accurate channel estimation in high-mobility scenarios. However, a pivotal question remains: how can we further enhance channel estimation accuracy—and consequently, symbol detection performance—given limited time-frequency resources? Prior studies have predominantly relied on applying standard DD domain filters, such as Sinc, RRC \cite{Saif_Predictability, Hanly_OTFS_Receiver, Saif_Interleaved_Pilot}, or Gaussian filters \cite{Zak_OTFS_closed_form_IO}, to the $\delta$-pulsone to analyze the channel estimation accuracy of the corresponding Zak-OTFS waveforms. To date, the potential of optimizing the pulse shaping filter to further improve the channel estimation performance has not been fully explored.

\subsection{Our Contributions}

In this paper, we consider designing the pulse shaping filter for more accurate channel estimation and better communication performance of Zak-OTFS. The key contributions of this paper are summarized as follows:
\begin{itemize}
    \item We derived the I/O relationship for the discrete oversampled Zak-OTFS system. By appending a cyclic prefix (CP) to the time-domain sequence, the resulting DD I/O relation remains consistent with the continuous Zak-OTFS framework. Specifically, the interaction is characterized by a discrete twisted convolution between the effective channel response and the data symbols, implying that all DD domain symbols experience an identical effective channel response.
    \item We investigate how the ambiguity function sidelobes of the time-frequency windows affect the spreading of the effective DD channel response. Our analysis confirms that windows with higher sidelobes ambiguity functions intensify channel spreading, which consequently leads to a deterioration in channel estimation performance.
    \item We propose a set of pulse shaping filters synthesized via the Isotropic Orthogonal Transform Algorithm (IOTA) framework using Prolate Spheroidal Wave Functions (PSWFs) \cite{slepian1961prolate}. Numerical results in the high-SNR regime confirm that this design outperforms the conventional RRC-windowed benchmark, offering reduced spectral width alongside enhanced channel estimation and BER.
\end{itemize}

The rest of this paper is organized as follows. We first review the I/O relation of Zak-OTFS and then derive the discrete-time oversampled Zak-OTFS I/O relation in Section \ref{sec:ZakOTFSModel}. We then introduce the Zak-OTFS channel estimation scheme and the motivation for designing the pulse shaping filter in Section \ref{sec:motivation}. In Section \ref{sec:solution}, we detail the properties of PSWFs and method for applying the IOTA framework to the PSWF-windowed pulse set. Numerical results are given in Section \ref{sec:numerical_results}. We finally make a conclusion in Section \ref{sec:conclusion}.

\section{Zak-OTFS System Model}\label{sec:ZakOTFSModel}

In this section, we first review the definition and input-output (I/O) relationship of Zak-OTFS based on continuous-time pulses and integration operations. Building upon these preliminaries, we derive the discrete-time oversampled Zak-OTFS I/O relationship.

\subsection{Zak-OTFS I/O Relation}

The Zak-OTFS \cite{Saif_Predictability} is first constructed in the DD domain by summing the multiplications between DD domain transmit symbols and the unfiltered basis functions, $\mathcal{Z}_{p^{\tau_l, \nu_k}}(\tau, \nu)$, i.e.,
\begin{equation}\label{sec2_eq:dd_transmit_signal}
    x_{\operatorname{dd}}(\tau, \nu) = \sum_{l=0}^{M-1} \sum_{k=0}^{N-1} X_{\operatorname{dd}}[l, k] \mathcal{Z}_{p^{\tau_l, \nu_k}}(\tau, \nu),
\end{equation}
where $X_{\operatorname{dd}}[l, k]$ is the information symbols at DD domain grid $(l, k) \in \mathcal{Q}$, $\tau_l=\frac{l}{M \Delta f}$, $\nu_k=\frac{k}{NT}$, $T$ and $\Delta f$ satisfying $T \Delta f = 1$ are the delay and Doppler periods, respectively and the unfiltered basis function is defined as
\begin{multline}\label{sec2_eq:dd_pulse}
    \mathcal{Z}_{p^{\tau_l, \nu_k}}(\tau, \nu)   =  \sum_{m \in \mathbb{Z}} \sum_{n \in \mathbb{Z}}                                                                                                             \\
    e^{j 2 \pi \nu_k n T} \delta\left(\tau-\tau_l-n T\right) \delta\left(\nu-\nu_k-m \Delta f\right).
\end{multline}
By implementing the inverse Zak-transform \cite{Saif_Predictability} on $x_{\operatorname{dd}}(\tau, \nu)$, the corresponding time domain transmit signal is given by
\begin{equation}\label{sec2_eq:td_transmit_signal}
    \begin{aligned}
        x(t) & = \sqrt{T} \int_{0}^{\Delta f} x_{\operatorname{dd}}(t, \nu) \,d\nu                   \\
             & = \sum_{l=0}^{M-1} \sum_{k=0}^{N-1} X_{\operatorname{dd}}[l, k] p^{\tau_l, \nu_k}(t),
    \end{aligned}
\end{equation}
where $p^{\tau_l, \nu_k}(t)$ is the corresponding time domain unfiltered basis signal after performing Zak-transform,
\begin{equation}\label{sec2_eq:delta_pulsone}
    p^{\tau_l, \nu_k}(t) = \sqrt{T} \sum_{n \in \mathbb{Z}} e^{j 2 \pi \nu_k n T} \delta(t - \tau_l - nT).
\end{equation}
Since the signal \eqref{sec2_eq:td_transmit_signal} occupies infinite time and frequency domain resources, it is not physically realizable. Both the time and frequency domain resources need to be restricted to implement the Zak-OTFS in practice.

According to \cite{Saif_Predictability, Hanly_Transmitter}, we can limit the occupied time and frequency resources by performing a twisted convolution\footnote{Consider $a(\tau, \nu)$ and $b(\tau, \nu)$ are function defined on $\mathbb{R}^{2}$, the twisted convolution between them is defined as $a *_{\sigma} b (\tau, \nu) = \iint_{\mathbb{R}^{2}} a(\tau^{\prime}, \nu^{\prime})b(\tau - \tau^{\prime}, \nu - \nu^{\prime}) e^{j 2 \pi \nu^{\prime} (\tau - \tau^{\prime})} \, d\tau^{\prime} \, d \nu^{\prime}$.} between a DD domain transmit filter, $g(\tau, \nu)$, and the DD domain transmit signal in \eqref{sec2_eq:dd_transmit_signal}. The DD domain twisted convolution filtered signal is given by
\begin{equation}\label{sec2_eq:dd_filtered_transmit_signal}
    \mathcal{Z}_{x_{g}}(\tau, \nu) = g *_{\sigma} \mathcal{Z}_{x}(\tau, \nu),
\end{equation}
where the corresponding time domain signal is denoted as $x_{g}(t)$. We obtain the filtered time domain transmit signal by implementing the inverse Zak-transform on \eqref{sec2_eq:dd_filtered_transmit_signal}. By denoting the doubly-dispersive channel as $h(\tau, \nu)$, the time domain receive signal is given by
\begin{equation}\label{sec2_eq:td_receive_signal}
    r(t)=\iint h(\tau, \nu) x_{g}(t-\tau) e^{j 2 \pi \nu(t-\tau)} d \tau d \nu + n(t),
\end{equation}
where $n(t)$ is the time domain noise. Now we implement Zak-transform \cite{Saif_Predictability} on \eqref{sec2_eq:td_receive_signal}, the corresponding DD domain receive signal can be represented as \cite{Saif_Predictability,Hanly_OTFS_Receiver}
\begin{multline}\label{sec2_eq:dd_receive_signal}
    \mathcal{Z}_{r}(\tau, \nu) = \sqrt{T} \sum_{k=-\infty}^{\infty} r(\tau + kT) e^{-j 2 \pi k \nu T}\\
    = h *_\sigma \mathcal{Z}_{x_g}(\tau, \nu) = h *_\sigma\left(g *_\sigma x_{\operatorname{dd}}\right)(\tau, \nu).
\end{multline}
The receiver performs another twisted convolution between a DD domain receiving filter, $\tilde{g}(\tau, \nu)$, and the DD domain receive signal \eqref{sec2_eq:dd_receive_signal}, and gets the DD domain filtered receive signal,
\begin{equation}
    y_{\operatorname{dd}}(\tau, \nu) = \tilde{g} *_{\sigma} \mathcal{Z}_{r}(\tau, \nu).
\end{equation}
Combining the results above, the DD domain I/O relation of Zak-OTFS is given by
\begin{equation}
    y_{\operatorname{dd}}(\tau, \nu) = h_{\operatorname{eff}}(\tau, \nu) *_{\sigma} x_{\operatorname{dd}}(\tau, \nu) + \tilde{n}_{\operatorname{dd}}(\tau, \nu),
\end{equation}
where $h_{\operatorname{eff}}(\tau, \nu) = \tilde{g} *_{\sigma} h *_{\sigma} g(\tau, \nu)$ is the effective channel response, and $\tilde{n}_{\operatorname{dd}}(\tau, \nu) = \tilde{g} *_{\sigma} \mathcal{Z}_{n}(\tau, \nu)$ is the DD domain effective noise. By sampling $y_{\operatorname{dd}}(\tau, \nu)$ at $\tau = \tau_l$ and $\nu = \nu_k$ and omitting the noise term, we have the discrete twisted convolution \cite{Saif_Predictability} in DD domain,
\begin{equation}\label{sec2_eq:DD_discret_IO}
    Y_{\operatorname{dd}}[l, k] = \sum_{l^{\prime} \in \mathbb{Z}} \sum_{k^{\prime} \in \mathbb{Z}} h_{\operatorname{eff}}[l^{\prime}, k^{\prime}] X_{\operatorname{dd}}[l-l^{\prime}, k-k^{\prime}] e^{j 2 \pi \frac{k^{\prime} (l - l^{\prime})}{MN}},
\end{equation}
where $Y_{\operatorname{dd}}[l, k]$ is the DD domain received symbols at $(l, k)$, $h_{\operatorname{eff}}[l, k]$ and $X_{\operatorname{dd}}[l, k]$ refer to sampling $h_{\operatorname{eff}}(\tau, \nu)$ and $x_{\operatorname{dd}}(\tau, \nu)$ at $\tau = \tau_l$ and $\nu = \nu_k$, respectively.

Since the samples $y_{\operatorname{dd}}[l, k]$ are quasi-periodic, $MN$ samples of $y_{\operatorname{dd}}[l, k]$ with $(l, k) \in \mathcal{M}_{M}\times\mathcal{M}_{N}$ contain all the information of $y_{\operatorname{dd}}[l, k]$ for any $(l, k) \in \mathbb{Z}^2$. Summarizing the results above, the DD domain input-output relation can be rewritten in matrix-form as
\begin{equation}\label{Sec2_input_output_matrix}
    \mathbf{y} = \mathbf{H} \mathbf{x} + \mathbf{n},
\end{equation}
where the $(kM + l)$-th elements in $\mathbf{x} \in \mathbb{C}^{MN \times 1}$, $\mathbf{y} \in \mathbb{C}^{MN \times 1}$, and $\mathbf{n} \in \mathbb{C}^{MN \times 1}$ are the samples of $x_{\operatorname{dd}}(\tau, \nu)$, $y_{\operatorname{dd}}(\tau, \nu)$, and $\tilde{n}_{\operatorname{dd}}(\tau, \nu)$ at $\tau = \tau_l$ and $\nu = \nu_k$, respectively, and the $(k^{\prime}M + l^{\prime}, kM + l)$-th element in $\mathbf{H} \in \mathbb{C}^{MN \times MN}$ is given by \cite{Saif_Predictability}
\begin{multline}\label{Sec2_effective_channel_ele}
    \hspace{-1em}H[k^{\prime}M + l^{\prime}, kM + l] = \sum_{(m, n) \in \mathbb{Z} \times \mathbb{Z}} h_{\operatorname{eff}}[l^{\prime} - l - nM, k^{\prime} - k - mN] \\
    e^{j 2 \pi \frac{nk}{N}} e^{j 2 \pi \frac{(k^{\prime} - k - mN)(l + nM)}{MN}},
\end{multline}

\subsection{Specific Zak-OTFS Implementation via Time and Frequency Windowing}\label{sec2_2_DD_IO_OFDM_based}
\begin{figure*}[t]
    \centering
    \includegraphics[width=1\textwidth]{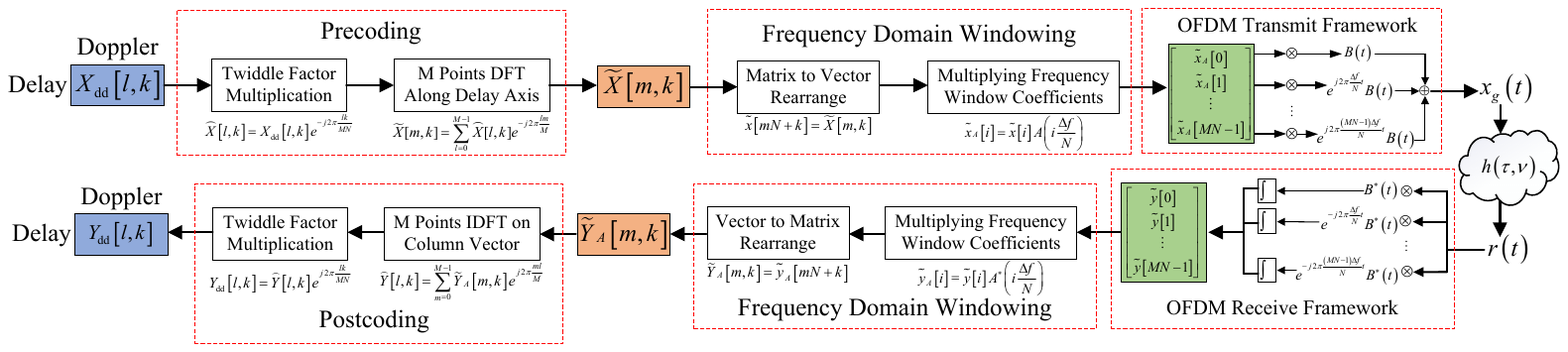}
    \caption{The transmit and receive framework of Zak-OTFS based on OFDM \cite{Hanly_Transmitter, Zak_OTFS_Receiver}.}\label{sec2_fig_OFDM_transceiver}
\end{figure*}
In this subsection, we introduce a specific implementation of Zak-OTFS that windowing the $\delta$-pulsone \eqref{sec2_eq:delta_pulsone} first in frequency domain and then in time domain. This specific type of Zak-OTFS can be expressed by a twisted convolution between the DD domain transmit filter and \eqref{sec2_eq:dd_pulse} is constructed as
\begin{equation}\label{sec2_eq:dd_filter}
    g(\tau, \nu) = \alpha(\tau) \beta(\nu) e^{j 2 \pi \tau \nu}.
\end{equation}
According to \cite{Hanly_Transmitter}, performing the twisted convolution of \eqref{sec2_eq:dd_filtered_transmit_signal} with $g(\tau, \nu)$ given in \eqref{sec2_eq:dd_filter} is equivalent to implement windowing on \eqref{sec2_eq:td_transmit_signal} first in frequency domain and then in time domain. The windowed time domain basis function is expressed as
\begin{multline}\label{Sec2_windwos_pulse}
    p^{\tau_l, \nu_k}_{g}(t) = B(t) e^{j 2 \pi \nu_k (t - \tau_l)} \\
    \times \sum_{m \in \mathbb{Z}} A(m \Delta f + \nu_k) e^{j 2 \pi m \Delta f (t - \tau_l)},
\end{multline}
where $A(f) = \int_{\mathbb{R}} \alpha(\tau) e^{-j 2 \pi f \tau} \, d\tau$ is the frequency domain window function, $B(t) = \int_{\mathbb{R}} \beta(\nu) e^{j 2 \pi \nu t} \, d\nu$ is the time domain window function, $p^{\tau_l, \nu_k}_{g}(t)$ is known as \textit{pulsone} \cite{Saif_Predictability}.

We assume the channel is sparse and there are $Q$ resolvable paths, which gives
\begin{equation}
    h(\tau, \nu) = \sum_{q=1}^{Q} h_{q} \delta(\tau - \tau_{q}) \delta(\nu - \nu_{q}),
\end{equation}
where $\tau_q$ and $\nu_q$ are the delay and Doppler shift of the $q$-th path, respectively. Then time-domain received signal after passing the channel then becomes
\begin{equation}
    r(t) = \sum_{q=1}^{Q} h_{q} x_{g}(t - \tau_{q}) e^{j 2 \pi \nu_{q} (t - \tau_{q})} + n(t).
\end{equation}
The DD domain receive filter is constructed as
\begin{equation}\label{sec2_eq:dd_receive_filter}
    \tilde{g}(\tau, \nu) = \alpha^{*}(-\tau) \beta^{*}(-\nu).
\end{equation}
Performing the twisted convolution between $\tilde{g}(\tau, \nu)$ given in \eqref{sec2_eq:dd_receive_filter} and $r(t)$ is equivalent to time windowing the received signal $r(t)$ with $B^{*}(t)$ followed by frequency windowing it with $A^{*}(f)$. Then, the effective channel response can be expressed as \cite{Hanly_OTFS_Receiver}
\begin{multline}\label{Sec2_effective_channel_2}
    h_{\operatorname{eff}}(\tau, \nu) = \sum_{q=0}^{Q-1} h_q e^{j 2 \pi (\tau \nu - \tau_q \nu_q)} \\
    \times \mathcal{Y}_{A}(\tau - \tau_q, -\nu) \mathcal{X}_{B}(-\tau_p, \nu - \nu_{q}),
\end{multline}
where $\mathcal{Y}_{A}(\tau, \nu) = \int_{\mathbb{R}} A(f) A^{*}(f - \nu) e^{j 2 \pi f \tau} \, df$ and $\mathcal{X}_{B}(\tau, \nu) = \int_{\mathbb{R}} B(t) B^{*}(t - \tau) e^{-j 2 \pi \nu (t-\tau)} \, dt$ are the ambiguity functions of $A(f)$ and $B(t)$, respectively.

According to the results in \cite{Hanly_OTFS_Receiver}, the noise process after twisted convolution filtering is a zero mean Gaussian process with covariance
\begin{multline}
    C_{\mathcal{Z}_{\tilde{n}}}(\tau, \nu \mid \tau^{\prime}, \nu^{\prime}) = \mathbb{E}[\mathcal{Z}_{\tilde{n}}(\tau, \nu) \mathcal{Z}_{\tilde{n}}(\tau^{\prime}, \nu^{\prime})]\\
    = \frac{N_0}{T} \tilde{g} *_{\sigma} g *_{\sigma} \mathcal{Z}_{p^{\tau^{\prime}, \nu^{\prime}}}(\tau, \nu),
\end{multline}
where $n(t)$ is a white Gaussian noise process with power spectral density $N_0$ Watts/Hz, and $\mathcal{Z}_{p^{\tau^{\prime}, \nu^{\prime}}}(\tau, \nu) = \sum_{(m, n) \in \mathbb{Z} \times \mathbb{Z}} e^{j 2 \pi \nu^{\prime} n T} \delta(\tau - \tau^{\prime} - nT)\delta(\nu - \nu^{\prime} - m \Delta f)$. When the window functions $A(f)$ and $B(t)$ are root Nyquist window \cite{Hanly_OTFS_Receiver}, the samples of $C_{\mathcal{Z}_{\tilde{n}}}(\tau, \nu \mid \tau^{\prime}, \nu^{\prime})$ at $(\tau_l, \nu_k, \tau_{l^{\prime}}, \nu_{k^{\prime}})$ are uncorrelated, i.e., the twisted convolution filtered noise still follows the white Gaussian process \cite{Hanly_OTFS_Receiver}. Otherwise, the $(kM+l, k^{\prime}M + l^{\prime})$-th element of noise samples correlation matrix $\mathbf{C}_{n} = \mathbb{E}[\mathbf{n} \mathbf{n}^{\operatorname{T}}]$ is given by
\begin{multline}\label{Sec2_noise_covariance_ele}
    C_{\mathcal{Z}_{\tilde{n}}}(\tau_l, \nu_k \mid \tau_{l^{\prime}}, \nu_{k^{\prime}}) = \frac{1}{T} \sum_{(m, n) \in \mathbb{Z} \times \mathbb{Z}} \mathcal{Y}_{A}(\tau_{l} - \tau_{l^{\prime}} - nT, 0) \\
    B(\tau_{l} - (n+m)T) B(\tau_{l^{\prime}} - mT) e^{j 2 \pi \nu_k nT} e^{j 2 \pi (\nu_k - \nu_k^{\prime})mT}.
\end{multline}

\subsection{Zak-OTFS Implementation through OFDM Framework}

Fig. \ref{sec2_fig_OFDM_transceiver} shows the transmit and receiver framework of Zak-OTFS based on OFDM, and in this subsection, we briefly verify that this architecture constitutes a valid implementation of Zak-OTFS. It has been proved in \cite{Hanly_Transmitter} that the transmit framework is equivalent to the modulation procedure \eqref{sec2_eq:td_transmit_signal}. When receiving $Y_{\operatorname{dd}}[l, k]$, the time domain received signal is placed on an OFDM receiver, and the $i$-th frequency receive response is given by
\begin{equation}
    \tilde{y}[i] = \int r(t) B^{*}(t) e^{-j 2 \pi \frac{i \Delta f}{N}t} \, dt.
\end{equation}
Then, the receiver performs frequency domain windowing and postcoding with $M$ points inverse discrete Fourier transform (IDFT) and twiddle factor multiplication, which can be expressed as
\begin{multline}\label{sec2_OFDM_receiver}
    Y_{\operatorname{dd}}[l, k] = \int r(t) \sum_{m \in \mathbb{Z}} B^{*}(t) e^{-j 2 \pi \frac{(m \Delta f + \nu_{k}) \Delta f}{N}t} \\
    \times A^{*}(m \Delta f + \nu_{k}) e^{j 2 \pi \frac{ml}{M}} e^{j 2 \pi \frac{lk}{MN}} \, dt.
\end{multline}
According to the results in \cite{Hanly_OTFS_Receiver}, the receiving procedure of Zak-OTFS is equivalent to the correlation receiver, i.e.,
\begin{equation}\label{Sec2_correlation_receive}
    Y_{\operatorname{dd}}[l, k] = T \int r(t) \left[p^{\tau_l, \nu_k}_{g}(t)\right]^{*} \,dt.
\end{equation}
Through straightforward algebraic manipulation, it becomes evident that \eqref{sec2_OFDM_receiver} is equivalent to \eqref{Sec2_correlation_receive}. This confirms that the reception process depicted in Fig. \ref{sec2_fig_OFDM_transceiver} corresponds precisely to the Zak-OTFS reception scheme.

\subsection{Discrete-Time Oversampled Zak-OTFS I/O Relation}\label{sec2_3_discrete_DD_IO_based}

Transitioning from continuous-domain theory to a discrete-time oversampled I/O relationship is essential for practical system realization. This derivation bridges the gap between theory and digital implementation, providing the necessary framework for Zak-OTFS simulations and hardware deployment.

We assume the sampling rate for the pulse \eqref{Sec2_windwos_pulse} is $F_s = L M \Delta f$ with $L \geq 1$ being the oversampling factor. Then, the $n$-th sample of the pulse $p^{\tau_l, \nu_k}_{g}(t)$ is given by
\begin{equation}
    p^{\tau_l, \nu_k}_{g}[n] = B[n] \sum_{m \in \mathbb{Z}} A(m \Delta f + \nu_k) e^{j 2 \pi (m \Delta f + \nu_k) (n - \tau_l)}.
\end{equation}
Then, the time domain transmit samples are represented as
\begin{equation}\label{Sec2_discrete_transmit_TD_symbols}
    s[n] = \sum_{(l, k) \in \mathcal{Q}} X_{\operatorname{dd}}[l, k] p^{\tau_l, \nu_k}_{g}[n].
\end{equation}
By denoting $\mathbf{x}_{\operatorname{dd}} = \operatorname{vec}(\mathbf{X}_{\operatorname{dd}})$, and building a matrix $\mathbf{P}$ with each column being the samples of the pulse, i.e.,
\begin{equation}\label{Sec2_sampled_pulse_matrix}
    \mathbf{P} = \left[
        \begin{matrix}
            \mathbf{p}^{\tau_0, \nu_0}_{g}, \mathbf{p}^{\tau_1, \nu_0}_{g}, \dots, \mathbf{p}^{\tau_{M-1}, \nu_{N-1}}_{g}
        \end{matrix}
        \right],
\end{equation}
where $\mathbf{p}^{\tau_l, \nu_k}_{g} \in \mathbb{C}^{N F_s \times 1}$ denotes the vector of sampled pulses of the windowed pulses on information grid $(\tau_l, \nu_k)$ with sample rate $F_s$, we obtain the vectorized form of \eqref{Sec2_discrete_transmit_TD_symbols},
\begin{equation}\label{Sec2_TD_symbols_vector}
    \mathbf{s} = \mathbf{P} \mathbf{x}_{\operatorname{dd}}.
\end{equation}

Without loss of generality, we assume a sufficiently high oversampling rate $L$ to ensure that all propagation delays align precisely with the discrete sampling grid so that the delay of the $q$-th path is characterized by an integer multiplication of sampling interval $T_s = 1 / F_s$. Assume that the cyclic prefix (CP) length exceeds the maximum delay across all $Q$ resolvable paths. After discarding the CP at the sensing receiver, the resulting discrete-time echo signal is then expressed as:
\begin{equation}\label{Sec2_TD_receive_symbols_vector}
    \mathbf{r} = \sum_{q=1}^{Q} h_{q} \mathbf{J}_{\tau_{q}} \mathbf{D}_{\nu_{q}} \mathbf{s} + \mathbf{n},
\end{equation}
where $\mathbf{n} \sim \mathcal{CN}(0, \sigma_{n}^{2} \mathbf{I}_{MN})$ stands for the white Gaussian noise, the matrix $\mathbf{J}_{k} \in \mathbb{C}^{LMN \times LMN}$ represents the $k$-th cyclic shift operator modeling the CP-induced circularity,
\begin{equation}
    \mathbf{J}_{k} = \left[
        \begin{matrix}
            \mathbf{0}           & \mathbf{I}_{k} \\
            \mathbf{I}_{LMN - k} & \mathbf{0}
        \end{matrix}
        \right], \mathbf{J}_{-k} = \mathbf{J}^{\operatorname{T}},
\end{equation}
and
\begin{equation}
    \mathbf{D}_{\nu} = \operatorname{Diag}\left\{\left[1, e^{j2\pi\frac{\nu}{LMN}}, \dots, e^{j2\pi\frac{(LMN-1)\nu}{LMN}}\right]\right\}.
\end{equation}

We now approximate the integral \eqref{Sec2_correlation_receive} with cross-correlation,
\begin{equation}\label{Sec2_discrete_correlation_receive}
    y_{\operatorname{dd}}[l, k] = \frac{T}{F_s} \sum_{n = 0}^{LMN-1} r[n] \left(p^{\tau_l, \nu_k}_{g}[n]\right)^{*},
\end{equation}
where $r[n]$ is the $n$-th entry of $\mathbf{r}$. Combine \eqref{Sec2_TD_symbols_vector}, \eqref{Sec2_TD_receive_symbols_vector}, and \eqref{Sec2_discrete_correlation_receive} together, the DD domain I/O relation is given by,
\begin{equation}
    \mathbf{y}_{\operatorname{dd}} = \mathbf{H}_{\operatorname{eff}} \mathbf{x}_{\operatorname{dd}} + \mathbf{P}^{\operatorname{H}} \mathbf{n},
\end{equation}
where $\mathbf{H}_{\operatorname{eff}}$ denotes the DD domain effective channel matrix,
\begin{equation}\label{Sec2_discrete_H_matrix}
    \mathbf{H}_{\operatorname{eff}} = \sum_{q=1}^{Q} h_{q} \mathbf{P}^{\operatorname{H}} \mathbf{J}_{\tau_{q}} \mathbf{D}_{\nu_{q}} \mathbf{P}.
\end{equation}
We then have the following proposition about the DD domain I/O relation of the discrete-time oversampled Zak-OTFS:
\begin{proposition}\label{Sec2_proposition_effective_channel_expression}
    The $(k^{\prime}M + l^{\prime}, kM + l)$-th element in the effective channel matrix of \eqref{Sec2_discrete_H_matrix} can also be represented in \eqref{Sec2_effective_channel_ele}, where $h_{\operatorname{eff}}[l, k]$ is the DD domain effective channel response given by
    \begin{multline}
        h_{\operatorname{eff}}[l, k] = \sum_{q=1}^{Q} h_{q} e^{j 2 \pi \frac{l k - l_{\tau_{q}} k_{\nu_{q}}}{MN}}\\
        \times \mathcal{Y}_{A}(\tau_{l} - \tau_{q}, -\nu_{k}) \mathcal{X}_{\tilde{B}}(-l_{\tau_{q}}, k - k_{\nu_{q}}),
    \end{multline}
    with $l_{\tau_{q}} = \tau_{q} M \Delta f$, $k_{\nu_{q}} = \nu_{q} N T$, and $\mathcal{X}_{\tilde{B}}(l, k)$ being the discrete periodic ambiguity function \cite{1057703} of the samples of time window function $\{B(n / F_s)\}_{n=0}^{LMN-1}$ defined as
    \begin{equation}\label{Sec2_discrete_periodic_AF_B}
        \mathcal{X}_{\tilde{B}}(l, k) = \sum_{n = 0}^{LMN-1} B[n] B[(n - l)_{LMN}] e^{-j 2 \pi \frac{k (n-l)}{LMN}}.
    \end{equation}
\end{proposition}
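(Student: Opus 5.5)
We compute one entry of the effective channel matrix directly, using the pulse matrix $\mathbf{P}$, and then match it term by term with \eqref{Sec2_effective_channel_ele}. Because $\mathbf{H}_{\operatorname{eff}}$ in \eqref{Sec2_discrete_H_matrix} is linear in the paths, it is enough to treat one path $q$ with delay $l_{\tau_q}$, in units of $T_s$ on the oversampled grid, and Doppler $\nu_q$. The entry to evaluate is
\begin{equation*}
[\mathbf{P}^{\operatorname{H}}\mathbf{J}_{\tau_q}\mathbf{D}_{\nu_q}\mathbf{P}]_{k'M+l',\,kM+l}=\sum_{n=0}^{LMN-1}\bigl(p_g^{\tau_{l'},\nu_{k'}}[n]\bigr)^{*}\,p_g^{\tau_l,\nu_k}[(n-l_{\tau_q})_{LMN}]\,e^{j2\pi\nu_q (n-l_{\tau_q})/(LMN)},
\end{equation*}
scaled by $T/F_s$.

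\textbf{Step 1: expand each pulse into its frequency-window representation.} Substitute
\begin{equation*}
p_g^{\tau_l,\nu_k}[n]=B[n]\sum_{m}A(m\Delta f+\nu_k)\,e^{j2\pi(m\Delta f+\nu_k)(n/F_s-\tau_l)}.
\end{equation*}
This turns the entry into a double sum over frequency indices $m,m'$. Inside it sits a single sum over $n$ of the product $B[n]B[(n-l_{\tau_q})]$ times an exponential in $n$. That inner $n$-sum is exactly the discrete periodic ambiguity function $\mathcal{X}_{\tilde B}(\cdot,\cdot)$ of \eqref{Sec2_discrete_periodic_AF_B}. Its Doppler argument is the frequency difference $(m'\Delta f+\nu_{k'})-(m\Delta f+\nu_k)-\nu_q$, expressed in units of $1/(NT)$. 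In these units it reads $(k'-k+(m'-m)N)-k_{\nu_q}$. The CP-induced cyclic shift is what makes the periodic ambiguity function appear, rather than an aperiodic one.

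\textbf{Step 2: collapse the sum over $m$ for fixed difference $m'-m$.} Re-index by $m$ and $d=m'-m$. Summing over $m$ gives
\begin{equation*}
\sum_m A(m\Delta f+\nu_k)\,A^{*}\bigl((m+d)\Delta f+\nu_{k'}\bigr)\,e^{j2\pi m\Delta f(\tau_{l'}-\tau_l-\tau_q)}.
\end{equation*}
Apply Poisson summation with respect to the grid spacing $\Delta f$. This converts the sum into a sum over $n\in\mathbb{Z}$ of $\mathcal{Y}_A$ evaluated at delays $\tau_{l'}-\tau_l-nT-\tau_q$ and Doppler $-\nu_{k'-k+dN}$, multiplied by phase factors. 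The sum over $n$ generates the index shift $nM$, and the sum over $d$ generates the index shift $mN$ in \eqref{Sec2_effective_channel_ele}. Under this identification, the arguments become those of $h_{\operatorname{eff}}[l'-l-nM,\,k'-k-mN]$ as defined in the proposition.

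\textbf{Step 3 (main obstacle): collect the phases.} The remaining exponentials must combine into three pieces:
\begin{itemize}
\item the twist $e^{j2\pi\frac{lk-l_{\tau_q}k_{\nu_q}}{MN}}$ inside $h_{\operatorname{eff}}$, evaluated at the shifted indices;
\item the quasi-periodicity factor $e^{j2\pi nk/N}$;
\item the twisted-convolution phase $e^{j2\pi (k'-k-mN)(l+nM)/(MN)}$.
\end{itemize}
The sources of phase are the $\nu_k\tau_l$ terms in the pulses, the $-\nu_q l_{\tau_q}$ term from $\mathbf{D}_{\nu_q}$ applied after the shift, and the Poisson phase $e^{j2\pi\nu_{k}nT}$. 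They have to be regrouped carefully, using $T\Delta f=1$, $\tau_l=l/(M\Delta f)$ and $\nu_k=k/(NT)$. Integer multiples of $2\pi$ must be discarded, and $\nu_{k'}$ must be rewritten as $\nu_k+\nu_{k'-k}$.

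There is one further point to handle in this step. The discrete $n$-sum only approximately equals the integral defining $\mathcal{Y}_A$, and that is precisely why the ambiguity function of $B$ appears in periodic discrete form while $A$ appears through its continuous form. I would therefore arrange the factorization so that all dependence on $B$ stays inside the exact discrete sum $\mathcal{X}_{\tilde B}$. The $A$-part is then handled by Poisson summation. Finally, summing over $q$ yields the stated $h_{\operatorname{eff}}[l,k]$.
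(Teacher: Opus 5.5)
Your plan follows the paper's proof. You treat one path, expand both pulses in their frequency-window form, recognise the time sum as $\mathcal{X}_{\tilde B}$ with Doppler argument $K=k'-k-(m-m')N-k_{\nu_q}$, apply Poisson summation to the $A$-sample sum at fixed $m-m'$, and match phases. The paper runs the same Poisson identity starting from the $\mathcal{Y}_A$ side.

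The one concrete error is your first display. Since $(\mathbf{J}_{l_{\tau_q}}\mathbf{x})[n]=x[(n-l_{\tau_q})_{LMN}]$ and $\mathbf{D}_{\nu_q}$ acts before the shift, the Doppler phase in $\mathbf{J}_{\tau_q}\mathbf{D}_{\nu_q}\mathbf{p}$ is the wrapped $e^{j2\pi k_{\nu_q}(n-l_{\tau_q})_{LMN}/(LMN)}$. Your unwrapped phase is that of $\mathbf{D}_{\nu_q}\mathbf{J}_{\tau_q}$ up to a constant. It differs by $e^{j2\pi k_{\nu_q}}$ on the wrapped samples $n<l_{\tau_q}$, so the two agree only for integer Doppler bins. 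Carried through, your display produces $\sum_n B[n]B[(n-l_{\tau_q})_{LMN}]e^{-j2\pi K(n-l_{\tau_q})/(LMN)}=\mathcal{X}_{\tilde B}(l_{\tau_q},K)$. The identity $\mathcal{X}_{\tilde B}(l_{\tau_q},K)=e^{j2\pi K l_{\tau_q}/(LMN)}\mathcal{X}_{\tilde B}(-l_{\tau_q},K)$, which would convert this to the stated form, holds only for integer $K$. So for fractional Doppler you would not land on the proposition.

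Start instead, as in \eqref{apdx_q_th_eff_matrix_kl}, from $\sum_{n}\bigl(p_g^{\tau_{l'},\nu_{k'}}[(n+l_{\tau_q})_{LMN}]\bigr)^{*}p_g^{\tau_l,\nu_k}[n]\,e^{j2\pi n k_{\nu_q}/(LMN)}$. The modulo can be dropped inside the pulse exponentials because $m'N+k'$ is an integer. Rewriting the $n$-dependence in terms of $n+l_{\tau_q}$ then gives exactly $\mathcal{X}_{\tilde B}(-l_{\tau_q},K)$. It also releases the constant $e^{-j2\pi\tau_q\nu_q}$, which is the statement's $e^{-j2\pi l_{\tau_q}k_{\nu_q}/(MN)}$.

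With that fixed, Step 3 is short rather than the main obstacle. Put $b=k'-k-\bar m N$ with $\bar m=m-m'$, and write $m'N+k'=(mN+k)+b$. The remaining phase factors as $e^{j2\pi b l'/(MN)}\,e^{j2\pi(m\Delta f+\nu_k)(\tau_{l'}-\tau_l-\tau_q)}$. Poisson summation turns $\sum_m A(m\Delta f+\nu_k)A^{*}(m\Delta f+\nu_k+\nu_b)e^{j2\pi(m\Delta f+\nu_k)(\tau_{l'}-\tau_l-\tau_q)}$ into $T\sum_{n}e^{j2\pi kn/N}\mathcal{Y}_A(\tau_{l'-l-nM}-\tau_q,-\nu_b)$. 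The leftover $e^{j2\pi b l'/(MN)}$ splits via $l'=(l'-l-nM)+(l+nM)$. With $a=l'-l-nM$, this gives the twist $e^{j2\pi ab/(MN)}$ inside $h_{\operatorname{eff}}[a,b]$ and the phase $e^{j2\pi b(l+nM)/(MN)}$ of \eqref{Sec2_effective_channel_ele}; this is precisely \eqref{apdx_AF_frew_window_3}.

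Finally, drop your remark about approximation: nothing in the argument is approximate. The time sum involves only $B$ and equals $\mathcal{X}_{\tilde B}$ exactly, not an approximation of $\mathcal{Y}_A$. Poisson summation is an exact identity on the $A$-side. That is why $A$ enters through the continuous $\mathcal{Y}_A$ and $B$ through the discrete periodic $\mathcal{X}_{\tilde B}$.
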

\begin{proof}
    See Appendix \ref{apdx_effective_channel}.
\end{proof}
Proposition \ref{Sec2_proposition_effective_channel_expression} indicates that every DD domain symbol undergoes the same effective channel response under the discrete-time oversampled Zak-OTFS framework, and the spreading of effective channel response is determined by the ambiguity function of frequency window and the discrete periodic ambiguity function of time window samples.

\section{Channel Estimation with Embedded Pilot Scheme}\label{sec:motivation}

In this section, we first recap how to estimate the channel with embedded pilot scheme in Section \ref{Sec3_Crystalline_Regime}. We then show that the window function with high sidelobes in the ambiguity function leads to energy leakage, thereby motivating the need for sidelobe suppression to enhance channel estimation accuracy in Section \ref{Sec3_Sidelobe_Suppression}.

\begin{figure}[t]
    \centering
    \includegraphics[width=0.9\columnwidth]{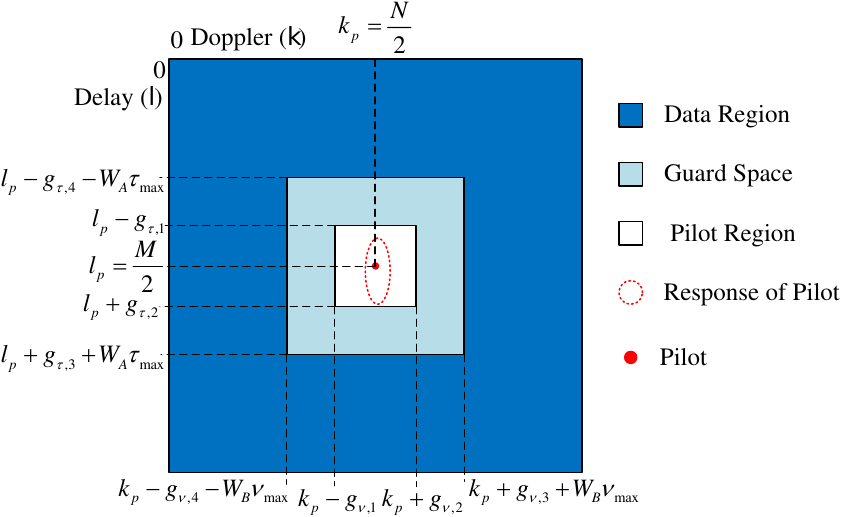}
    \caption{An OTFS frame with embedded pilot, guard space, and data.}\label{Sec3_fig_embedded_pilot_illustration}
\end{figure}
\subsection{Crystalline Regime Condition and the Embedded Pilot Channel Estimation Scheme}\label{Sec3_Crystalline_Regime}

Assuming the pilot, denoted as $x_{p}$, is placed at $(l_p, k_p) = \left(\lfloor \frac{M}{2} \rfloor, \lfloor \frac{N}{2} \rfloor\right)$, the received response from \eqref{sec2_eq:DD_discret_IO} is represented as
\begin{multline}\label{Sec3_pilot_response}
    Y_{\operatorname{dd}, p}[l, k] =  x_p h_{\operatorname{eff}}[l - l_p, k - k_p] e^{j 2 \pi \frac{(k - k_p)l_p}{MN}} \\
    + \sum_{(m, n) \in \mathbb{Z} \times \mathbb{Z} \backslash \{(0, 0)\}} x_p h_{\operatorname{eff}}[l - l_p - nM, k - k_p - mN] \\
    \times e^{j 2 \pi \frac{nk}{N}} e^{j 2 \pi \frac{(k - k_p - mN)(l_p + nM)}{MN}}.
\end{multline}
Under the crystalline regime condition\footnote{For the time and frequency windowed Zak-OTFS signal model in \eqref{Sec2_input_output_matrix}, the crystalline regime condition \cite{Saif_Predictability} holds if $\tau_{\max} < T$, $\nu_{\max} < \Delta f$, $W_A \gg \Delta f$, and $W_B \gg T$, where $\tau_{\max}$ and $\nu_{\max}$ are the maximum delay and Doppler shift of the propagation paths, respectively.}, the second term in \eqref{Sec3_pilot_response} can be ignored so that the effective channel response can be approximated by the $MN$ samples in the primal crystalline regime, i.e., the rectangular area $(\tau, \nu) \in [-T/2, T/2] \times [-\Delta f/2, \Delta f/2]$. Therefore, the effective channel response can be estimated by
\begin{equation}\label{Sec3_channel_estimation_scheme}
    \hat{h}_{\operatorname{eff}}[l, k] = \begin{cases}
        \frac{y_{\operatorname{dd}}[l+l_p, k+k_p]}{x_p} e^{-j 2\pi \frac{k l_p}{MN}},                     \\
        \;\;\;\;\operatorname{ if } -\frac{M}{2} \leq l < \frac{M}{2}, -\frac{N}{2} \leq k < \frac{N}{2}, \\
        0,                                        \operatorname{ otherwise}.
    \end{cases}
\end{equation}

For the sake of simplicity, we consider estimating the effective channel through the embedded pilot scheme. The corresponding frame structure is shown in Fig. \ref{Sec3_fig_embedded_pilot_illustration}, where $\{g_{\tau, i}\}_{i=1}^{4}$ and $\{g_{\nu, i}\}_{i=1}^{4}$ are nonnegative integers determined by the duration and bandwidth of the OTFS frame. The effective channel response is obtained from the received responses in the `Pilot Region'.

\subsection{Motivation to Suppress the Sidelobes of the Ambiguity Function}\label{Sec3_Sidelobe_Suppression}

\begin{figure}[t]
    \centering
    \subfloat[Normalized ambiguity function of $A(f)$ being a rectangular function.]{
        \includegraphics[width=0.46\columnwidth]{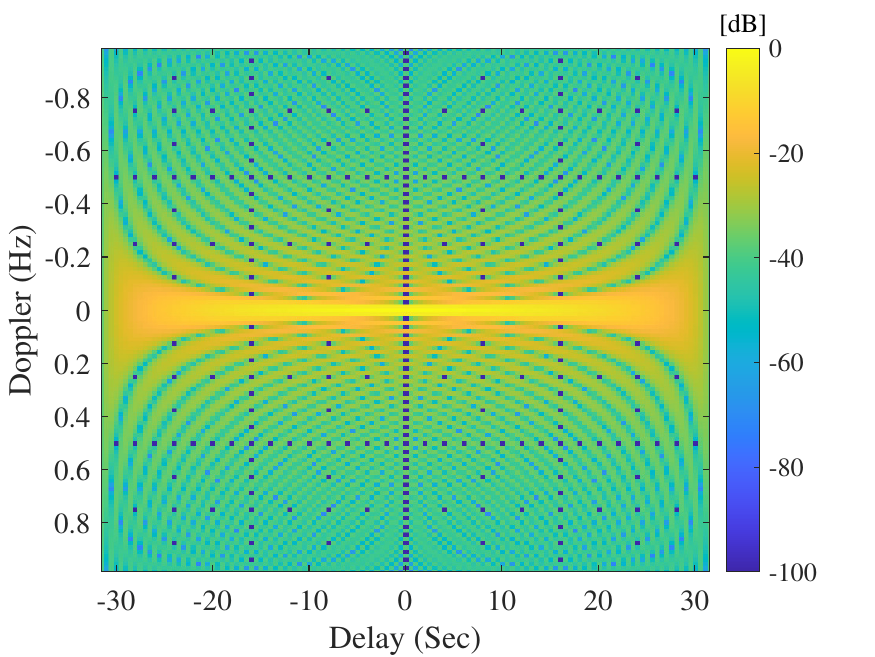}
        \label{Sec3_fig_rect_af}
    }
    \hspace{0.25em}
    \subfloat[Normalized ambiguity function of $A(f)$ with $\alpha(\tau)$ being an RRC function.]{
        \includegraphics[width=0.46\columnwidth]{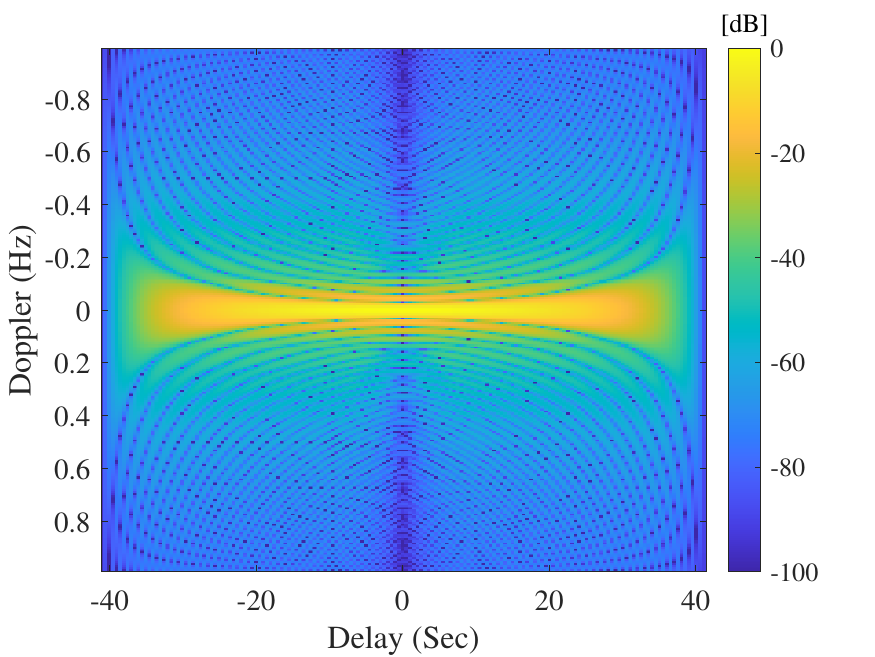}
        \label{Sec3_fig_rrc_af}
    }
    \caption{Ambiguity functions under different window functions.}
    \label{Sec3_fig_afs}
\end{figure}

Fig. \ref{Sec3_fig_embedded_pilot_illustration} shows that to achieve accurate channel estimation under the embedded pilot scheme, it requires the effective channel response to be localized in the `Pilot Region' as much as possible. According to \eqref{Sec2_effective_channel_2}, the ambiguity functions of the time and frequency windows jointly determine the spreading extent of the effective channel, and the window with a low sidelobe level ambiguity function will provide a more localized effective channel response. To show this illustratively, we provide a toy example here. Taking $M=32$, $N=32$, $T=1$, and $\Delta f = 1$ for simplicity, and assuming the channel contains only a single-path with $h(\tau, \nu) = \delta(\tau - 4 \frac{T}{M}) \delta(\nu - 4.7 \frac{\Delta f}{N})$. Fig. \ref{Sec3_fig_afs} shows that the ambiguity function of the RRC window has a lower sidelobe level than the rectangular window. As shown in Fig. \ref{Sec3_fig_effective_channel_response}, the effective channel response of Zak-OTFS under the RRC windows is more localized compared to that under the rectangular window. This suggests that by suppressing the sidelobes of the ambiguity function associated with the window function under a limited time-frequency resource, it achieves a more localized effective channel response, potentially leading to improved channel estimation accuracy, thereby enhancing the communication performance.

\begin{figure}[t]
    \centering
    \subfloat[Normalized effective channel response with $\alpha(\tau)$ and $\beta(\nu)$ being Sinc functions.]{
        \includegraphics[width=0.46\columnwidth]{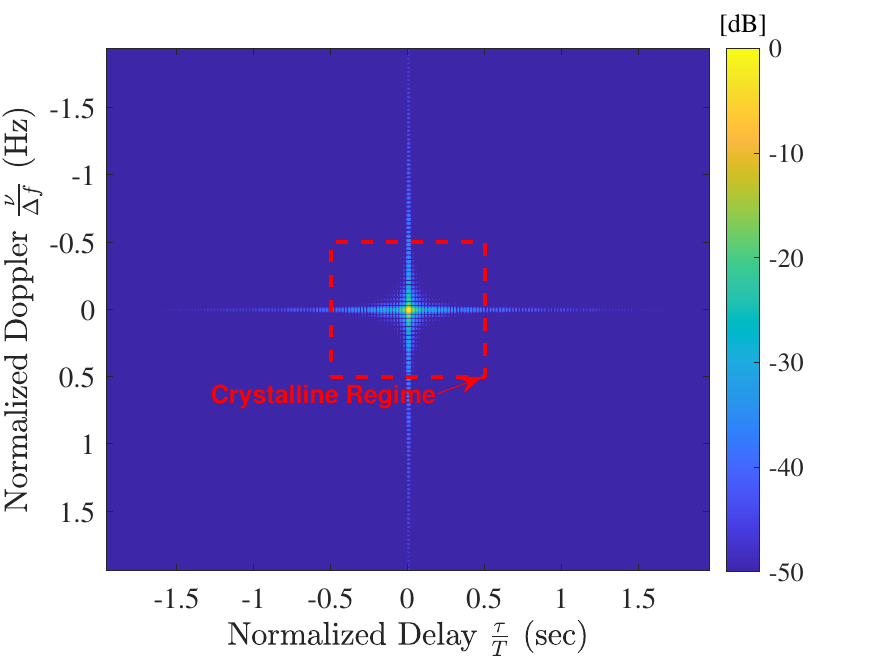}
        \label{Sec3_fig_effective_channel_response_rectangle}
    }
    \hspace{0.25em}
    \subfloat[Normalized effective channel response with $\alpha(\tau)$ and $\beta(\nu)$ being RRC functions, roll-off factor $0.3$.]{
        \includegraphics[width=0.46\columnwidth]{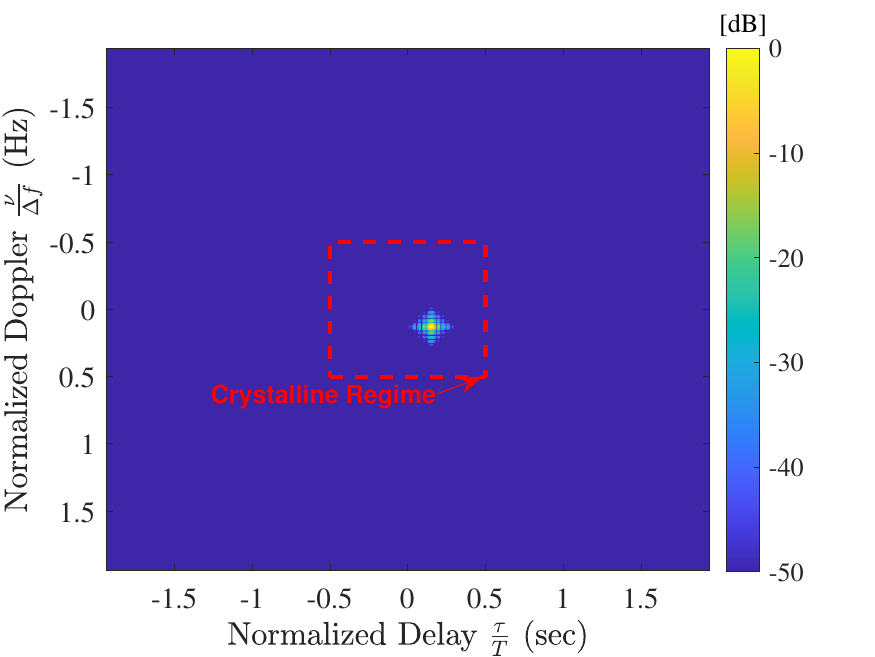}
        \label{Sec3_fig_effective_channel_response_hamming}
    }
    \caption{Effective channel responses under different window functions.}
    \label{Sec3_fig_effective_channel_response}
\end{figure}

\section{Proposed IOTA Pulse Shaping Design Framework}\label{sec:solution}

In this section, we design the pulse shaping filter through the IOTA to obtain a more localized effective channel response.

Motivated by the conclusion in Section \ref{Sec3_Sidelobe_Suppression} that low-sidelobe ambiguity functions yield the desired localized channel responses, our goal is to optimize the window design. However, globally suppressing sidelobes on the entire delay-Doppler plane is a well-known intractable problem in radar signal processing \cite{1057703}. To overcome this, we employ the IOTA framework, which simplifies this complex 2D optimization into a tractable autocorrelation function sidelobe suppression task. From the result in \cite[Lemma 1]{Hanly_OTFS_Receiver}, we have
\begin{equation}\label{Sec2_approx_AF}
    \begin{aligned}
        \mathcal{Y}_{A}(\tau, \nu) & = \mathcal{Y}_{A}(\tau, 0) + \mathcal{O}\left(\sqrt{\frac{|\nu|}{W_A}}\right), \\
        \mathcal{X}_{B}(\tau, \nu) & = \mathcal{X}_{B}(0, \nu) + \mathcal{O}\left(\sqrt{\frac{|\tau|}{W_B}}\right).
    \end{aligned}
\end{equation}
Recap the effective channel response in \eqref{Sec2_effective_channel_2}, we see from \eqref{Sec2_approx_AF} that suppressing the sidelobes of $\mathcal{Y}_{A}(\tau, 0)$, i.e., the autocorrelation function of $\alpha(\tau)$, and $\mathcal{X}_{B}(0, \nu)$, i.e., the autocorrelation function of $\beta(\nu)$, will lead to a more localized effective channel response.

Now the problem reduces to maximizing energy concentration in time (frequency) domain subject to a finite support constraint in the Doppler (delay) domain. This specific optimization challenge is uniquely solved by PSWFs \cite{slepian1961prolate}, also known as Slepian functions. PSWFs are a family of band-limited functions that maximize the energy fraction contained within a specified time interval, and they are orthogonal to each other. For any given time-bandwidth product, the zeroth-order PSWF represents the theoretically optimal window function for simultaneous time-frequency localization. The expressions of 1-D PSWFs in the $n$-th order is given by the eigenvectors of the Fredholm integral:
\begin{equation}
    \int_{-T^{\prime}}^{T^{\prime}} \operatorname{sinc}(B^{\prime} (t - t^{\prime})) \psi_{n}(t^{\prime}) = \lambda_{n} \psi_{n}(t),
\end{equation}
where the eigenvalue $\lambda_{n}$ represent the fraction of energy contained within the bandwidth $B^{\prime}$, and the corresponding eigenfunction $\psi_{0}(t)$ of the first eigenvalue $\lambda_{0}$ represented the maximally localized pulse. We construct the frequency window function band-limited in frequency domain interval $[-M \Delta f / 2, M \Delta f / 2]$ and maximally localized within time domain interval $[-1 / (2 M \Delta f), 1 / (2 M \Delta f)]$. Similarly, we construct the time window function time-limited in time domain interval $[- NT/2, NT/2]$ and maximally localized within the frequency domain interval $[-1/(2NT), 1/(2NT)]$.

Although applying PSWF windowing to the $\delta$-pulsone yields the most localized effective channel response, it does not provide the optimal symbol detection performance. In \eqref{Sec2_noise_covariance_ele}, since the PSWF windows are not root-Nyquist, it shows that the effective noise in the DD domain will be colored Gaussian noise rather than white if the time and frequency windows are not root-Nyquist. Furthermore, the non-root-Nyquist nature induces intrinsic interference between the symbols in DD domain.

One straightforward way to address this non-root-Nyquist issue is to apply IOTA on the DD domain filter. For example, in \cite{mehrotra2025pulse}, the authors utilizes the IOTA to synthesize DD pulse shaping filters by orthogonalizing the PSWFs, with respect to shifts on the information lattice to theoretically minimize the interference between DD domain symbols. However, since a finite-duration time domain window implies infinite Doppler support, the requisite truncation compromises the theoretical orthogonality, introducing intrinsic distortion that deviates from the analytical model. Moreover, the claimed $99.99\%$ energy concentration within $[-T/2, T/2]$ for a $[-B/2, B/2]$ band-limited Doppler domain filter is contradicted by their results after performing IOTA on the truncated DD domain filters \cite{mehrotra2025pulse}.

Instead of directly modifing the DD domain filters, we propose a different method where the $\delta$-pulsone is first shaped by implementing the zeroth-order PSWF window to maximize joint time-frequency localization. Subsequently, we sample the windowed pulse set with sampling rate $F_s$ and implement the IOTA towards the sampled pulse set to enforce orthogonality on the DD domain information lattice. By formulating the samples of pulses as the matrix $\mathbf{P}$ in \eqref{Sec2_sampled_pulse_matrix} and denoting $\mathbf{R} = \mathbf{P}^{\operatorname{H}} \mathbf{P}$, the orthogonalized pulses are given by
\begin{equation}
    \tilde{\mathbf{P}} = \mathbf{P} \mathbf{R}^{-\frac{1}{2}}.
\end{equation}
It is easy to verify that
\begin{equation}
    \begin{aligned}
        \tilde{\mathbf{P}}^{\operatorname{H}} \tilde{\mathbf{P}} & = \mathbf{R}^{-\frac{1}{2}} \mathbf{P}^{\operatorname{H}} \mathbf{P} \mathbf{R}^{-\frac{1}{2}} \\
                                                                 & =\mathbf{I}_{MN}.
    \end{aligned}
\end{equation}
Then, the Zak-OTFS frame is transmitted utilizing $\tilde{\mathbf{P}}$ as the pulse shaping filter set.

\section{Numerical Results}\label{sec:numerical_results}
\begin{table}[t]
    \centering
    \caption{Power-delay profile of Veh-A channel model.}\label{tab:eva_params}
    \begin{tabular}{|c|c|c|c|c|c|c|}
        \hline Path number $(q)$                                      & $1$ & $2$    & $3$    & $4$    & $5$    & $6$    \\
        \hline $\tau_q$ ($\mu$s)                                      & $0$ & $0.31$ & $0.71$ & $1.09$ & $1.73$ & $2.51$ \\
        \hline $\frac{\mathbb{E}[|h_q|^2]}{\mathbb{E}[|h_1|^2]}$ (dB) & $0$ & $-1$   & $-9$   & $-10$  & $-15$  & $-20$  \\
        \hline
    \end{tabular}
\end{table}

In this section, we evaluate the performance of Zak-OTFS signal under designed pulse shaping filters in terms of BER, and channel estimation normalized MSE (NMSE).

\subsection{Simulation Parameters}\label{params_scheme}
\begin{table}[t]
    \centering
    \caption{Parameters setting in simulations.}\label{tab:parameters}
    \begin{tabular}{|c|c|c|}
        \hline Parameter     & Definition                & Value                       \\
        \hline $\Delta f$    & Doppler period            & $15$ (kHz)                  \\
        \hline $T$           & Delay period              & $66.67$ ($\mu$s)            \\
        \hline $M$           & Number of delay bins      & $32$                        \\
        \hline $N$           & Number of Doppler-bins    & $16$                        \\
        \hline $\beta$       & Roll-off factor           & $0.3$                       \\
        \hline $W_A$         & Frequency window duration & $(1+\beta)\cdot M \Delta f$ \\
        \hline $W_B$         & Time window duration      & $(1+\beta)\cdot N T$        \\
        \hline $f_c$         & Carrier frequency         & $4$ (GHz)                   \\
        \hline $\tau_{\max}$ & Maximum delay             & $2.51$ ($\mu$s)             \\
        \hline $\nu_{\max}$  & Maximum Doppler shift     & $815$ (Hz)                  \\
        \hline $F_s$         & Sample rate               & $10 M$                      \\
        \hline $[g_1, g_2]$  & Guard space size          & $[5, 7]$                    \\
        \hline
    \end{tabular}
\end{table}
\begin{figure*}[t]
    \centering
    \subfloat[Time domain pulse under PSWF windows.]{
        \includegraphics[width=0.315\textwidth]{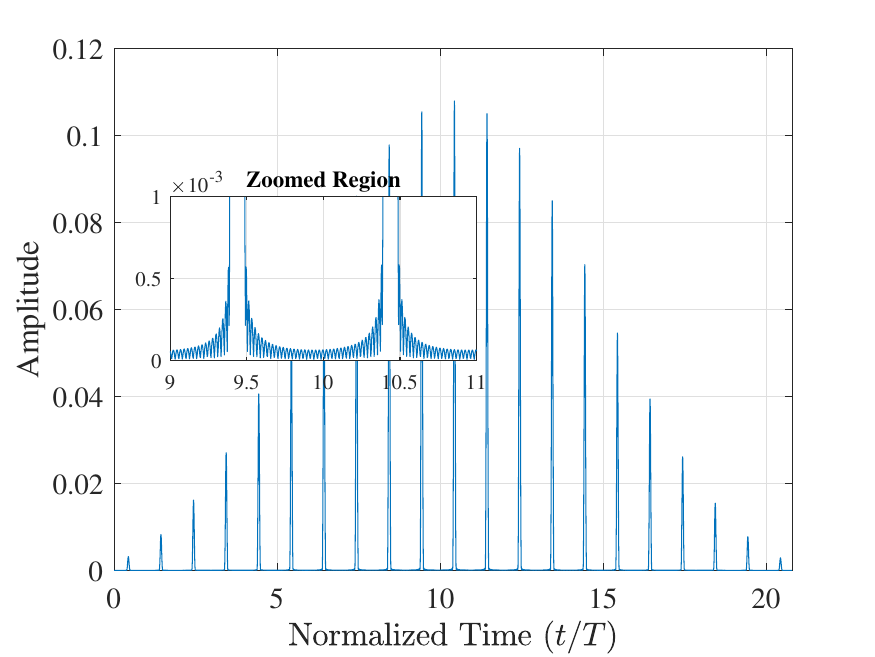}
        \label{Se4_fig_pswf_pulse}
    }
    \hspace{0.05em}
    \subfloat[Time domain pulse after performing IOTA under PSWF windows.]{
        \includegraphics[width=0.315\textwidth]{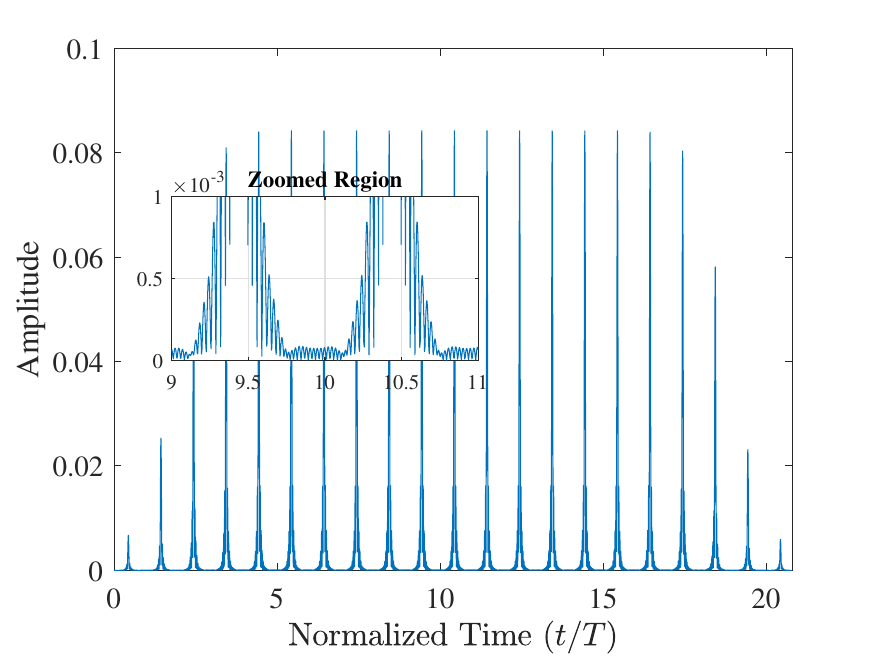}
        \label{Sec4_fig_iota_pswf_pulse}
    }
    \hspace{0.05em}
    \subfloat[Time domain pulse under RRC windows.]{
        \includegraphics[width=0.315\textwidth]{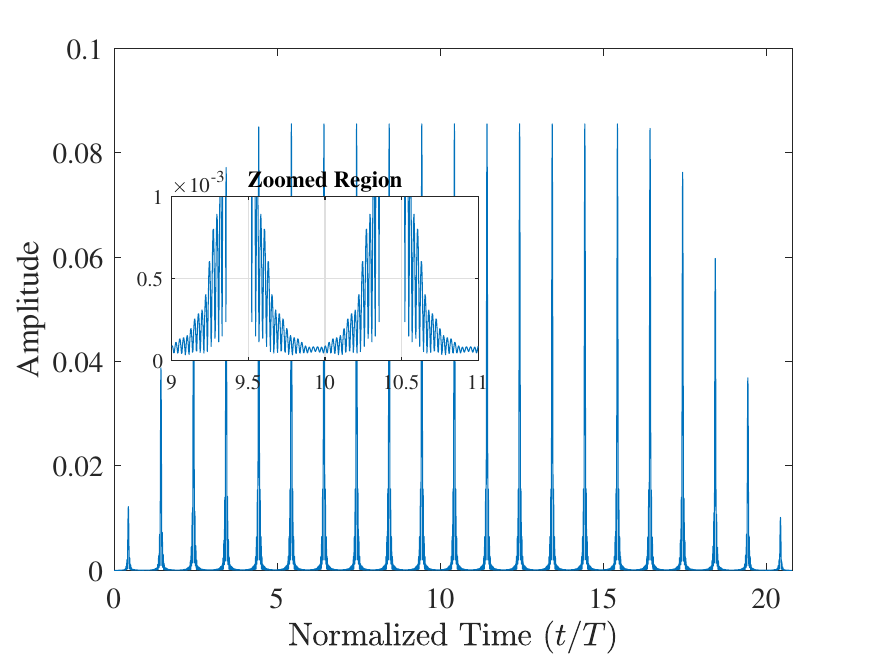}
        \label{Sec4_fig_rrc_pulse}
    }
    \caption{Visualizations of time domain pulses under different windows, $p^{\tau_{l}, \nu_{k}}_{g}(t)$ with $(\tau_{l}, \nu_{k}) = \left(\frac{15}{M \Delta f}, \frac{7}{N T}\right)$.}
    \label{Sec4_pulses_TD_visual}
\end{figure*}

We consider the six-path Vehicular A model (EVA) \cite{3gpp.36.104} with power delay profile given in Table \ref{tab:eva_params}. Unless specified otherwise, the simulation parameters are given in Table \ref{tab:parameters}. The time and frequency duration of rectangular windows are $NT$ and $M\Delta f$, respectively. The Doppler shift $\{\nu_{q}\}_{q=1}^{6}$ of the $q$-th channel path is modeled as $\nu_{\max} \cos(\theta_{q})$, where $\nu_{\max}$ is the maximum Doppler shift, and $\{\theta_{q}\}_{q=1}^{6}$ are i.i.d. random variables distributed uniformly in $[0, 2\pi)$. Based on the parameters in Table \ref{tab:parameters}, we construct the Zak-OTFS frame with the guard space as a rectangular region $\mathcal{G} = [l_p-g_1, l_p+g_1] \times [k_p-g_2, k_p+g_2]$, and the pilot region as $\mathcal{P} = [l_p-(g_1-2), l_p+(g_1-2)] \times [k_p-(g_2-2), k_p+(g_2-2)]$. The pilot is placed at $(l_p, k_p) = (M/2, N/2)$. The transmit DD domain symbols are represented as
\begin{equation}
    \hat{x}_{\operatorname{dd}}[l, k] = \begin{cases}
        \sqrt{\frac{E_d}{n_d}} \cdot x_{\mathcal{I}}[l, k], & \operatorname{if } (l, k) \operatorname{ in data region}, \\
        x_{p},                                              & \operatorname{if } (l, k) = (l_p, k_p),                   \\
        0,                                                  & \operatorname{otherwise},
    \end{cases}
\end{equation}
where $x_{\mathcal{I}}[l, k]$ are the information symbols randomly drawn from an energy-normalized constellation set (we consider 4-QAM in this paper), $x_{p} = \sqrt{E_{p}}$ is the pilot symbol, $E_{d}$ is the total energy of transmit symbols, $n_d$ is the number of transmitted data symbols, and $E_{p}$ is the energy of pilot, and the pilot-to-data power ratio (PDR) is defined in dB by
\begin{equation}
    \operatorname{PDR} = 10 \log_{10} \left(\frac{E_p}{E_d}\right),
\end{equation}
Without loss of generality, we set $E_d = n_d$. Unless specified otherwise, we set $E_p = E_d$, i.e., $\operatorname{PDR} = 0$ dB. The channel estimation is implemented according to \eqref{Sec3_channel_estimation_scheme}, and the estimation accuracy is evaluated through NMSE. We implemented $10^{4}$ Monte-Carlo simulations to obtain the average performance results, and the LMMSE detector is implemented for symbol detection at the receiver.

\subsection{Pulse Shaping Filter Visualization}

Fig. \ref{Sec4_pulses_TD_visual} illustrates the time-domain pulse shapes. The PSWF-windowed pulse, as shown in Fig. \ref{Sec4_pulses_TD_visual}\subref{Se4_fig_pswf_pulse}, exhibits the sharpest energy concentration with rapid temporal decay. After performing the IOTA transformation, the pulse shown in Fig. \ref{Sec4_pulses_TD_visual}\subref{Sec4_fig_iota_pswf_pulse} broadens the main lobe to satisfy orthogonality requirements. Then, the temporal envelope of the IOTA-PSWF pulse closely resembles that of the standard RRC pulse shown in Fig. \ref{Sec4_pulses_TD_visual}\subref{Sec4_fig_rrc_pulse}, indicating comparable temporal spreading characteristics between the two orthogonalized schemes.

\begin{figure}[t]
    \centering
    \includegraphics[width=0.99\columnwidth]{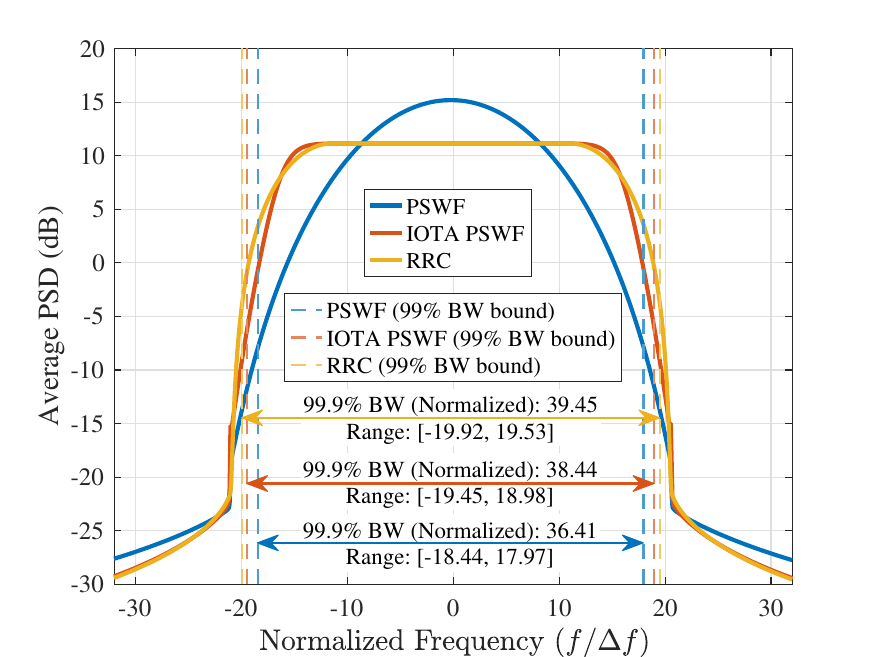}
    \caption{Average PSDs of transmit signals under different windows.}
    \label{Sec4_avg_psds}
\end{figure}

To theoretically analyze the spectral characteristics, we consider the transmitted signal as a linear combination of basis functions modulated by an independent and identically distributed (i.i.d.) symbol sequence $X_{\operatorname{dd}}[l,k]$ with zero mean and variance $\sigma_x^2$, where $\sigma_x^2 = 1$ for 4-QAM modulation. According to the Wiener-Khinchin theorem, the average Power Spectral Density (PSD), $\bar{S}(f)$, is determined by the spectral properties of the basis functions. For the proposed framework, the theoretical PSD is formulated as:
\begin{equation}
    \bar{S}(f) = \frac{\sigma_x^2}{T_{sym}} \sum_{l,k} |P^{\tau_{l}, \nu_{k}}(f)|^2,
\end{equation}
where $P^{\tau_{l}, \nu_{k}}(f)$ denotes the Fourier transform of the basis function $p^{\tau_{l}, \nu_{k}}(t)$ and $T_{sym}$ represents the symbol duration. This derivation implies that the overall frequency spectrum is fundamentally governed by the frequency response of every pulse shaping filter. Fig. \ref{Sec4_avg_psds} presents the numerical evaluation of the PSD based on this analysis. The pure PSWF scheme achieves the most localized average frequency response with a minimum bandwidth of $36.41$. After the IOTA transformation, the bandwidth expands slightly to $38.44$, and the PSD shape transitions to a flat-top profile similar to the RRC characteristic. However, the IOTA-PSWF pulse remains more spectrally efficient than the pulse under RRC window, which exhibits the largest bandwidth occupation of $39.45$.

\begin{figure}[t]
    \centering
    \includegraphics[width=0.85\columnwidth]{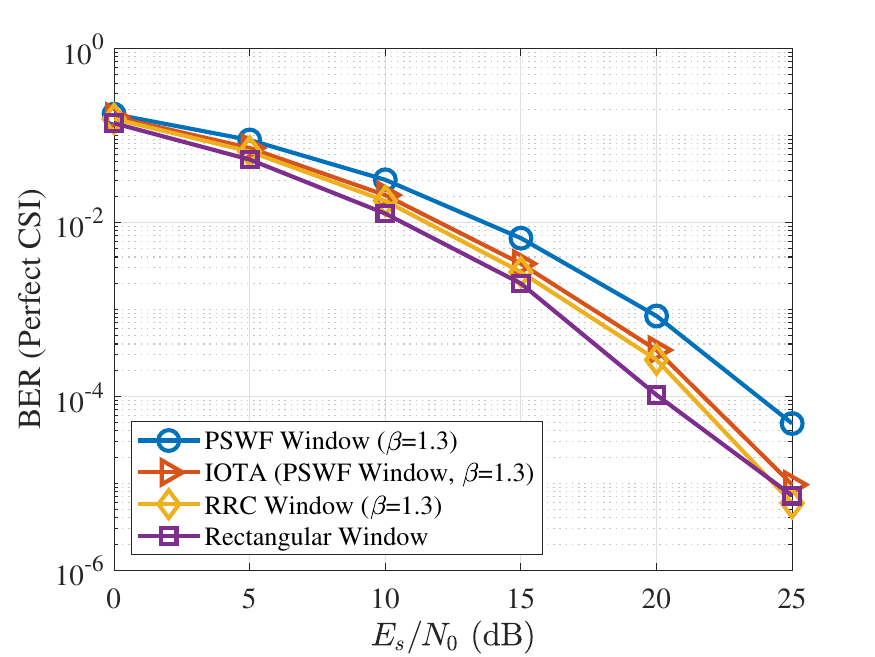}
    \caption{Uncoded 4-QAM data detection performance with perfect CSI at the receiver under different SNR.}\label{Sec4_perfectCSI_BER}
\end{figure}

\subsection{Communication Performance with Perfect CSI}
\begin{figure*}[t]
    \centering
    \subfloat[Responses of pulses under rectangular windows.]{
        \includegraphics[width=0.23\textwidth]{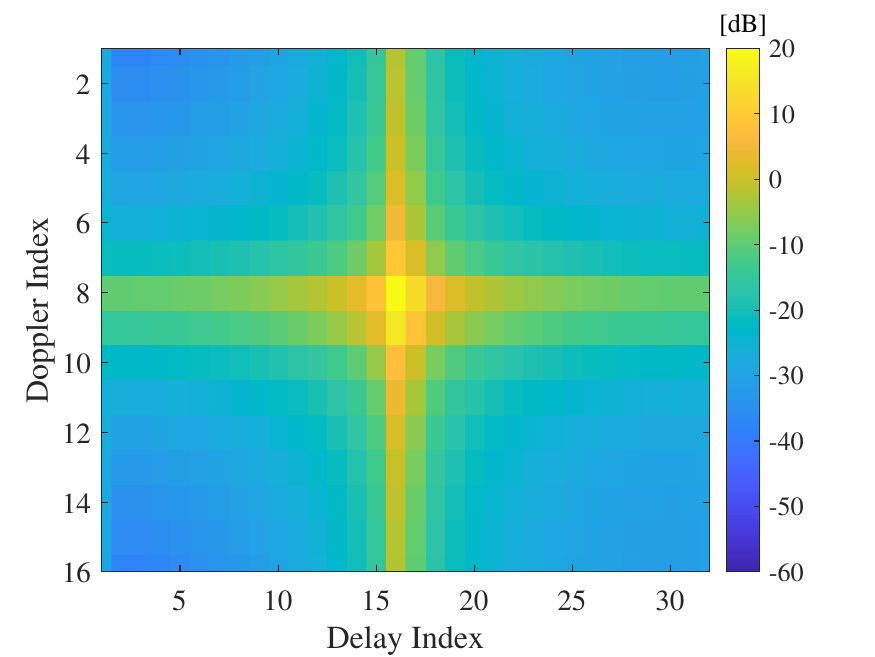}
        \label{Se4_fig_ch_eff_rect_pulse}
    }
    \hspace{0.05em}
    \subfloat[Responses of pulses under PSWF windows.]{
        \includegraphics[width=0.23\textwidth]{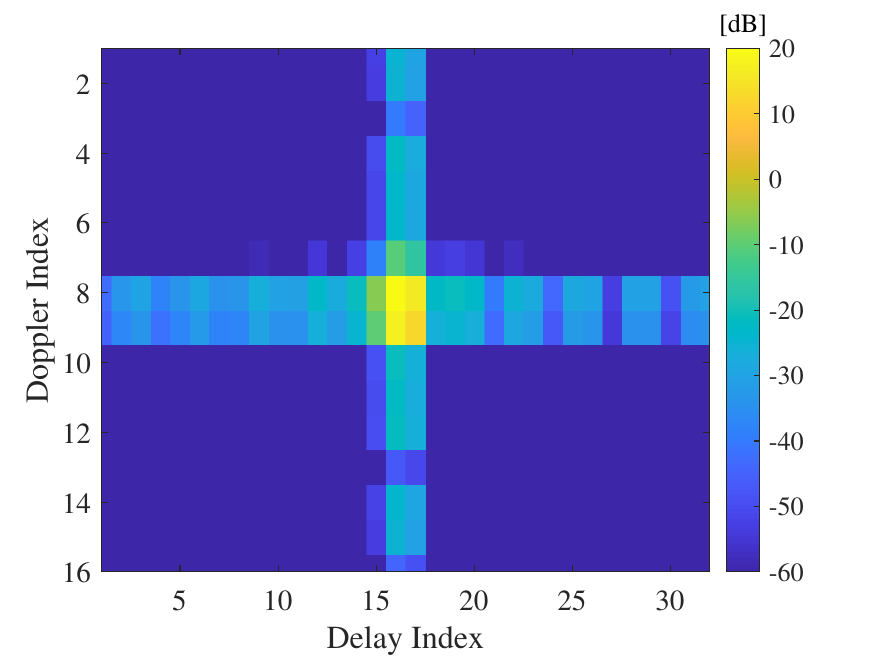}
        \label{Sec4_fig_ch_eff_pswf_pulse}
    }
    \hspace{0.05em}
    \subfloat[Responses of pulses after performing IOTA under PSWF windows.]{
        \includegraphics[width=0.23\textwidth]{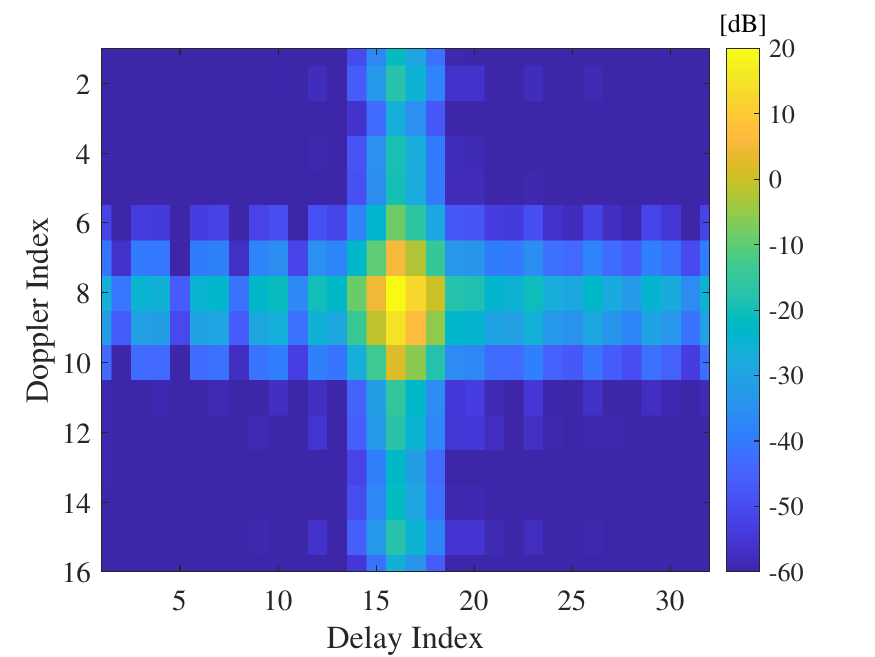}
        \label{Sec4_fig_ch_eff_iota_pswf_pulse}
    }
    \hspace{0.05em}
    \subfloat[Responses of pulses under RRC windows.]{
        \includegraphics[width=0.23\textwidth]{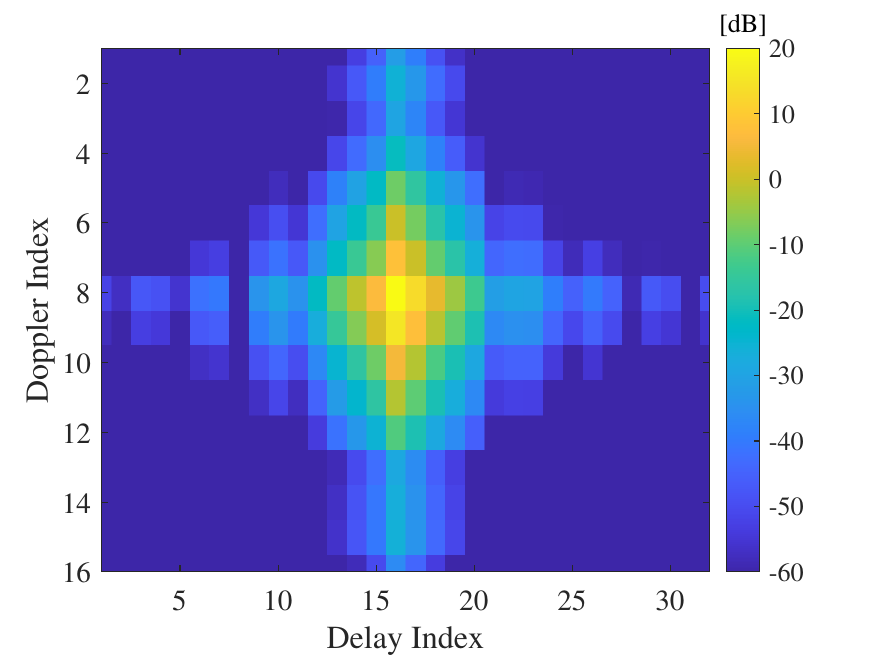}
        \label{Sec4_fig_ch_eff_rrc_pulse}
    }
    \caption{Noiseless receive responses of pilot under different windows, where the pilot is placed at $(\tau_{l}, \nu_{k}) = \left(\frac{15}{M \Delta f}, \frac{7}{N T}\right)$.}
    \label{Sec4_ch_eff_pulses}
\end{figure*}

\begin{figure*}[t]
    \centering
    \subfloat[Uncoded 4-QAM data detection performance with estimated effective channel at the receiver.\label{Sec4_estCSI_BER_GZ}]{\includegraphics[width=0.4\textwidth]{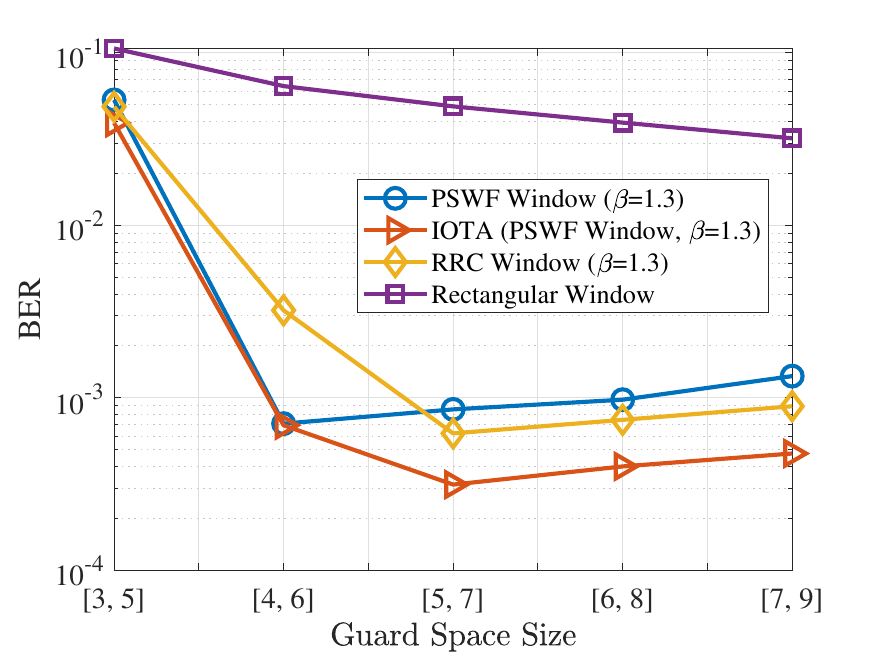}}
    \hspace{1em}
    \subfloat[Effective channel estimation NMSE.\label{Sec4_estCh_NMSE_GZ}]{\includegraphics[width=0.4\textwidth]{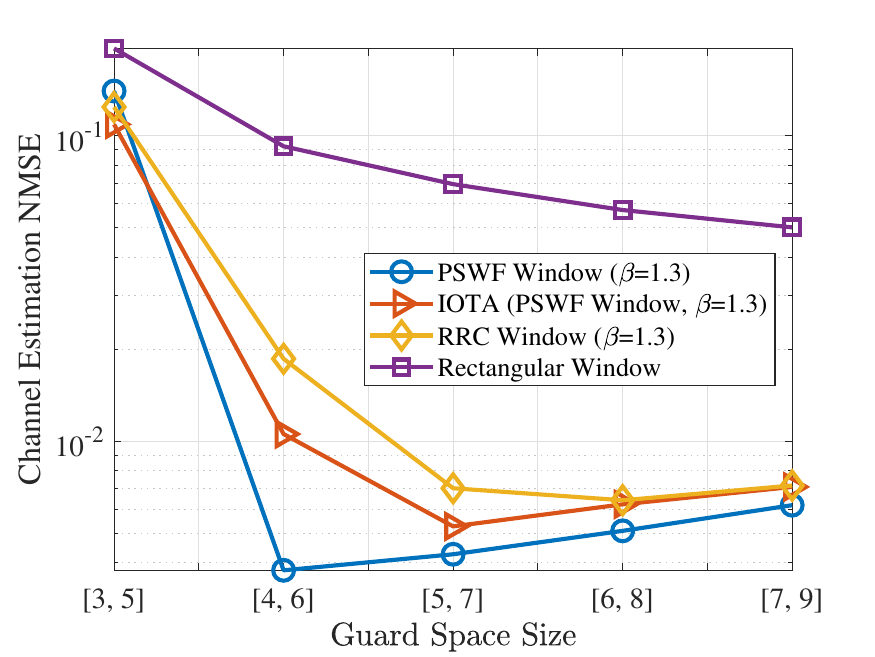}}
    \caption{Performance under different guard space sizes.}\label{Sec4_GZ_performance}
\end{figure*}
Fig. \ref{Sec4_perfectCSI_BER} evaluates the BER performance of uncoded 4-QAM signaling with perfect CSI. The BER under PSWF-windowed pulse is poor due to its non-orthogonality. The BER performance is improved after performing the IOTA procedure since it suppresses the intrinsic interference by orthogonalizing the PSWF pulse set. Furthermore, the IOTA-PSWF performance closely approaches that of the standard RRC window, with the marginal gap attributed to numerical precision limitations in computing $\mathbf{R}^{-\frac{1}{2}}$, which prevent the columns of $\tilde{\mathbf{P}}$ from satisfying strict orthogonality. The BER performance is better under the rectangular-windowed pulse set. That's because we set the power of every pulse as $1$, when the roll-off factor of the window is not $0$, the extended duration will lead to a higher energy of noise at the receiver, thereby resulting in a worse BER performance of RRC and IOTA-PSWF pulse sets.

\subsection{Channel Spreading of Different Pulses}

Fig. \ref{Sec4_ch_eff_pulses} illustrates the noiseless receive response of the pilot symbol in the DD domain after passing the same channel. The receive response under rectangular windowing, as shown in Fig. \ref{Sec4_ch_eff_pulses}\subref{Se4_fig_ch_eff_rect_pulse}, exhibits the most severe spreading, which is attributed to the high sidelobes inherent to the rectangular pulse's ambiguity function. In contrast, the receive response under the PSWF-windowed pulse, as shown in Fig. \ref{Sec4_ch_eff_pulses}\subref{Sec4_fig_ch_eff_pswf_pulse}, demonstrates the maximally localized channel spreading among these pulses. This aligns with the theoretical property of PSWF that maximize energy concentration within a prescribed time-bandwidth region, thereby yielding the lowest ambiguity sidelobes and minimal channel spreading.

After performing the IOTA procedure on the PSWF-windowed pulses, the corresponding receive response, shown in Fig. \ref{Sec4_ch_eff_pulses}\subref{Sec4_fig_ch_eff_iota_pswf_pulse}, introduces a slight expansion in channel spreading compared to the pure PSWF baseline. However, the IOTA-PSWF pulse still outperforms the RRC-windowed pulse, whose receive response is shown in Fig. \ref{Sec4_ch_eff_pulses}\subref{Sec4_fig_ch_eff_rrc_pulse}. This confirms the effectiveness of IOTA-PSWF in mitigating inter-symbol interference while preserving spectral efficiency.

\subsection{Communication Performance with Estimated CSI}

\begin{figure*}[t]
    \centering
    \subfloat[Uncoded 4-QAM data detection performance with estimated effective channel at the receiver.\label{Sec4_estCSI_BER_pilotAmp}]{\includegraphics[width=0.4\textwidth]{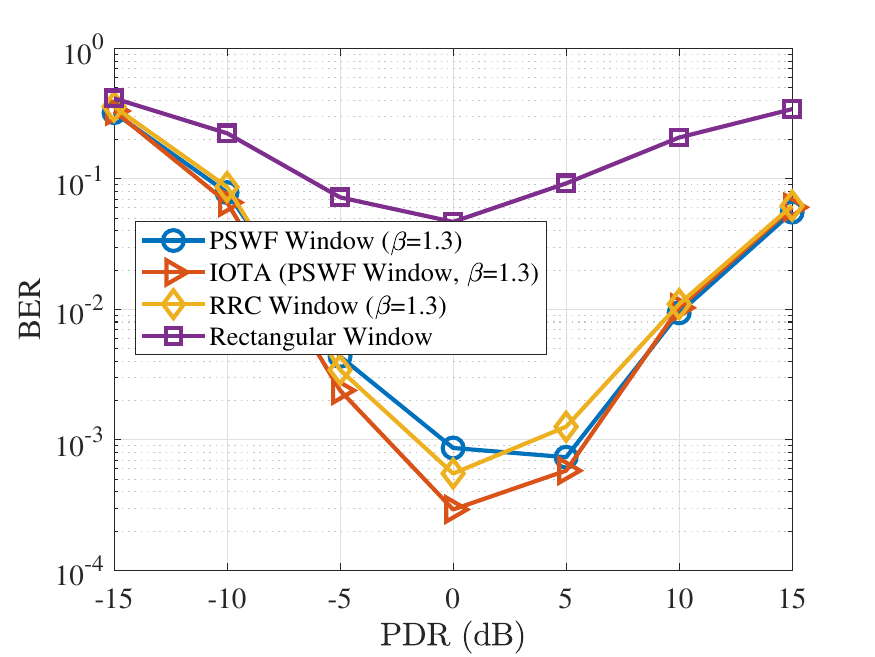}}
    \hspace{1em}
    \subfloat[Effective channel estimation NMSE.\label{Sec4_estCh_NMSE_pilotAmp}]{\includegraphics[width=0.4\textwidth]{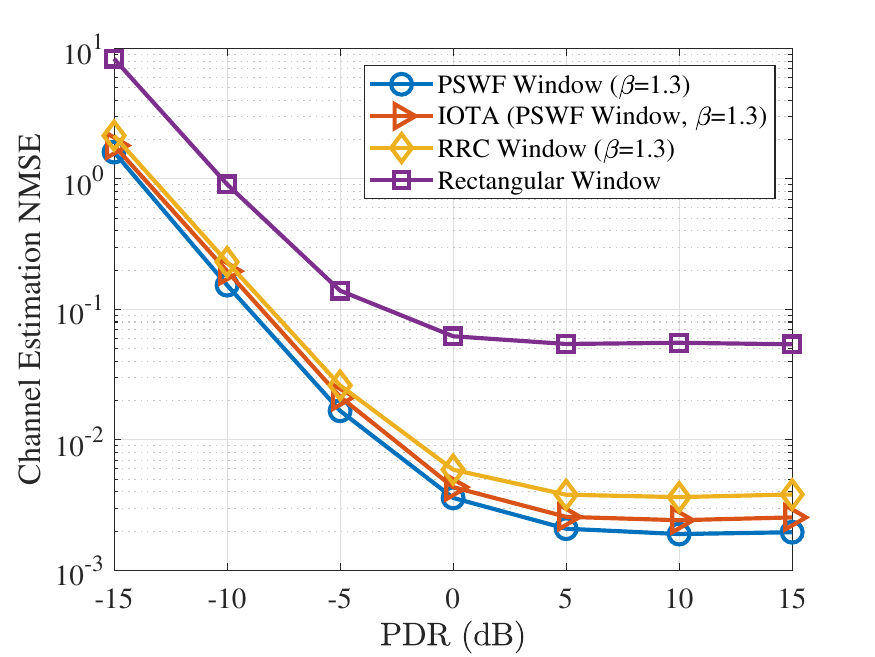}}
    \caption{Performance under different PDR.}\label{Sec4_PDR_performance}
\end{figure*}
\subsubsection{Communication Performance under Different Guard Space Sizes}

Fig. \ref{Sec4_GZ_performance} investigates the impact of guard space dimensions on system performance. In the regime of restricted guard spaces (e.g., $[3,5]$), the BER performance is poor for all these windowing schemes. This is because the small pilot region fails to sample enough channel response, leading to poor channel estimation accuracy, which further results in bad BER performance. As the guard space expands, the pilot region contains a larger proportion of the channel response, thereby yielding more precise channel estimates and enhancing symbol detection performance. However, continuously enlarging the guard space eventually degrades both the estimation accuracy and BER. That's because with a fixed pilot amplitude, an excessively large pilot region accumulates additional DD domain effective noise, thereby reducing the effective SNR for channel estimation.

Comparative analysis reveals distinct trade-offs among the windowing schemes. The pulse under PSWF window achieves the lowest NMSE due to its optimal localization; however, it exhibits poor BER performance. This discrepancy arises because the PSWF-windowed pulse set is non-orthogonal, causing severe intrinsic interference among DD domain data symbols despite the accurate channel estimates. Conversely, while the IOTA-PSWF scheme yields slightly higher channel estimation error than the PSWF-windowed pulse set, the orthogonality among its pulses eliminates intrinsic interference, thereby improving data detection.

Furthermore, the BER performance of IOTA-PSWF consistently outperforms the standard RRC window under different guard space sizes. The RRC-windowed pulse set suffers from more severe channel spreading, which generally results in worse channel estimation and detection performance. Notably, at large guard sizes (e.g., $[7,9]$), even when the IOTA-PSWF and RRC schemes yield comparable channel estimation accuracy, the IOTA-PSWF maintains a superior BER. This advantage stems from the more localized channel spreading; the broader spreading of the RRC-windowed pulse set induces more severe residual interference between data symbols, thereby degrading the BER performance.

\subsubsection{Communication Performance under Different Pilot-to-Data-Power-Ratio}
\begin{figure*}[t]
    \centering
    \subfloat[Uncoded 4-QAM data detection performance with estimated effective channel at the receiver.\label{Sec4_estCSI_BER}]{\includegraphics[width=0.4\textwidth]{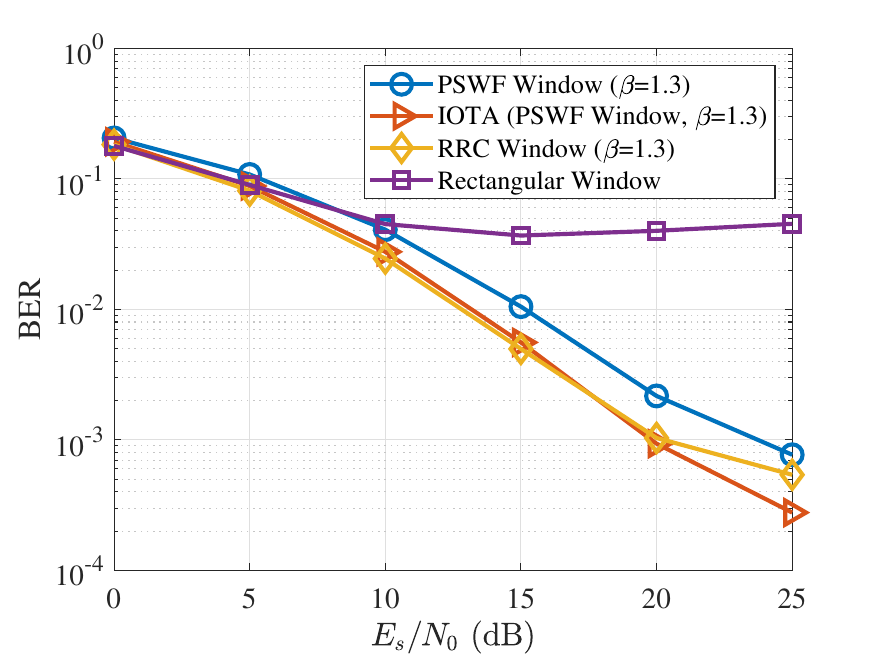}}
    \hspace{1em}
    \subfloat[Effective channel estimation NMSE.\label{Sec4_estCh_NMSE}]{\includegraphics[width=0.4\textwidth]{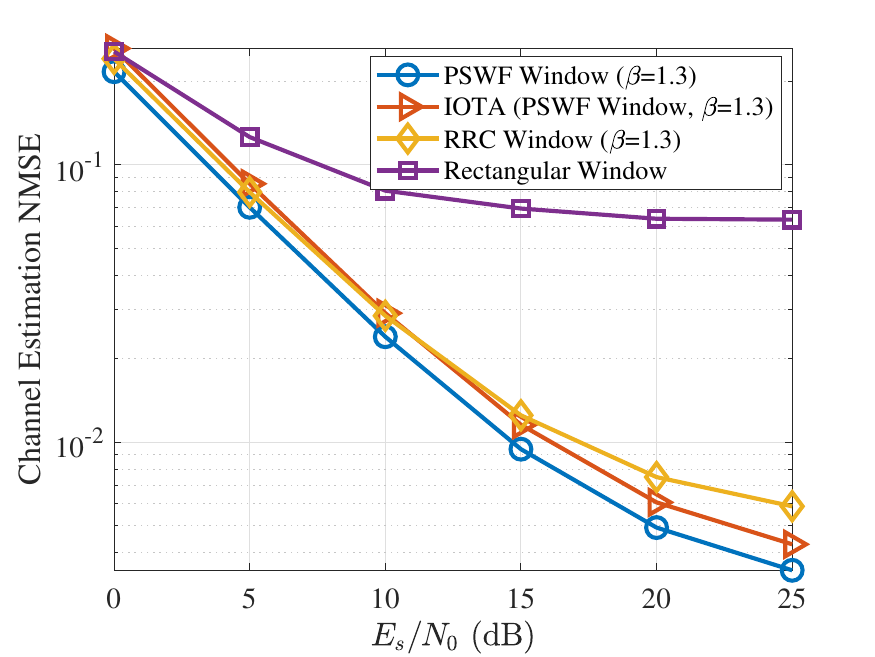}}
    \caption{Performance under different SNR.}\label{Sec4_SNR_performance}
\end{figure*}

Fig. \ref{Sec4_PDR_performance} investigates the impact of the PDR on system performance. In the low PDR regime, the BER is universally poor across all pulse shapes. This is primarily attributed to insufficient pilot energy, which yields inaccurate channel estimation that compromise data detection.

As the PDR increases, the improved channel estimation accuracy enhances the BER performance. However, beyond an optimal threshold (approximately 0 dB), the performance degrades. This reversal occurs due to the limited samples on channel response; increasing the pilot power cannot compensate for the information loss caused by the finite sampling (i.e., missing channel information outside the pilot region). Crucially, as the PDR continues to rise, these unsampled components of the high-power pilot response imposing more severe interference on the data symbols and deteriorating the BER. Among these pulse sets, the proposed IOTA-PSWF pulse set consistently achieves the best trade-off.

\subsubsection{Communication Performance under Different SNR}

Fig. \ref{Sec4_SNR_performance} evaluates the system performance across varying SNRs. In the low SNR regime, both channel estimation accuracy and BER are poor due to noise dominance. As SNR increases, the channel estimation precision improves for all pulse shapes, leading to corresponding gains in BER performance.

The rectangular window exhibits the poorest performance. Its high ambiguity function sidelobes cause severe channel spreading, meaning the receiver samples a smaller fraction of the total channel response under the same size of pilot region, resulting in inaccurate channel estimation and degraded detection. Conversely, the pure PSWF window achieves the lowest NMSE due to its optimal localization. However, its BER performance is compromised by the lack of orthogonality among pulses, which induces interference in the DD domain symbols.

The proposed IOTA-PSWF pulse set effectively addresses this limitation. By restoring orthogonality, its BER significantly outperforms the pulse set under the PSWF windowing while maintaining comparable estimation accuracy. At low SNRs, the IOTA-PSWF and RRC-windowed pulse sets show similar performance despite the former's more localized channel response. This is because, in low SNR regimes, the power of noise overshadowing the benefits of more localized effective channel response.

\section{Conclusion}\label{sec:conclusion}

This paper presented a comprehensive framework for pulse shaping in Zak-OTFS systems, moving from theoretical derivation to practical design. By establishing the discrete-time oversampled I/O relationship, we proved that the effective channel response remains consistent across all DD domain symbols. Leveraging this model, we identified the ambiguity function sidelobes as the key factor degrading channel estimation and proposed the IOTA-PSWF design to address this issue. Extensive simulations confirm that our proposed scheme effectively mitigates channel spreading and achieves superior error performance compared to standard windowing techniques. These findings provide a solid foundation for the practical deployment of Zak-OTFS in high-mobility scenarios.
\begin{figure*}[t]
    \begin{multline}\label{apdx_q_th_eff_matrix_kl}
        H_{\operatorname{eff}, q}[k^{\prime}M + l^{\prime}, kM + l] = h_{q} (\mathbf{p}_{g}^{\tau_{l^{\prime}}, \nu_{k^{\prime}}})^{\operatorname{H}} \mathbf{J}_{l_{\tau_{q}}} \mathbf{D}_{k_{\nu_{q}}} \mathbf{p}_{g}^{\tau_{l}, \nu_{k}} = \sum_{n = 0}^{LMN - 1} h_{q} (p_{g}^{\tau_{l^{\prime}}, \nu_{k^{\prime}}}[(n+l_{\tau_{q}})_{LMN}])^{*} p_{g}^{\tau_{l}, \nu_{k}}[n] e^{j 2 \pi \frac{n k_{\nu_{q}}}{LMN}}\\
        = h_{q} \sum_{n = 0}^{LMN - 1} B[(n + l_{\tau_{q}})_{LMN}] \sum_{m^{\prime} \in \mathbb{Z}} A(m^{\prime} \Delta f + \nu_{k^{\prime}}) e^{-j 2 \pi (m^{\prime} N + k^{\prime}) \frac{n + l_{\tau_{q}} - L l^{\prime}}{LMN}} B[n] \\
        \times \sum_{m \in \mathbb{Z}} A(m \Delta f + \nu_{k}) e^{j 2 \pi (m N + k) \frac{n - L l}{LMN}} e^{j 2 \pi \frac{n k_{\nu_{q}}}{LMN}}\\
        = h_{q} e^{-j 2 \pi \frac{l_{\tau_{q}} k_{\nu_{q}}}{LMN}} \hspace{-1em}\sum_{(m, m^{\prime}) \in \mathbb{Z} \times \mathbb{Z}} \mathcal{X}_{\tilde{B}}(-l_{\tau_{q}}, k^{\prime} - k - (m - m^{\prime}) N - k_{\nu_{q}}) A(m^{\prime} \Delta f + \nu_{k^{\prime}}) A(m \Delta f + \nu_{k})\\
        \times e^{j 2 \pi (m^{\prime} N + k^{\prime}) \frac{l^{\prime}}{MN}} e^{-j 2 \pi (m N + k) \frac{l + l_{\tau_{q}}}{MN}}
    \end{multline}\normalsize
\end{figure*}
\begin{figure*}[t]
    \begin{multline}\label{apdx_AF_freq_window}
        \sum_{m \in \mathbb{Z}} e^{j 2 \pi \frac{k m}{N}} \mathcal{Y}_{A}(\tau_{l^{\prime}} - \tau_{l} - \tau_{q} - m T, -(\nu_{k^{\prime}} - \nu_{k} - \bar{m} \Delta f)) \\
        = \int A(f) A(f + \nu_{k^{\prime}} - \nu_{k} - \bar{m} \Delta f) e^{j 2 \pi f (\tau_{l^{\prime}} - \tau_{l} - \tau_{q})} \sum_{m \in \mathbb{Z}} e^{-j 2 \pi m T (f - \frac{k}{NT})} \, df,
    \end{multline}
\end{figure*}
\begin{figure*}[t]
    \begin{multline}\label{apdx_AF_frew_window_3}
        \sum_{m \in \mathbb{Z}} A(m^{\prime} \Delta f + \nu_{k^{\prime}}) A(m \Delta f + \nu_{k}) e^{j 2 \pi (m^{\prime} N + k^{\prime}) \frac{l^{\prime}}{MN}} e^{-j 2 \pi (m N + k) \frac{l + l_{\tau_{q}}}{MN}}\\
        = \sum_{m \in \mathbb{Z}} e^{j 2 \pi \frac{k m}{N}} \mathcal{Y}_{A}(\tau_{l^{\prime}} - \tau_{l} - \tau_{q} - m T, -(\nu_{k^{\prime}} - \nu_{k} - \bar{m} \Delta f)) e^{j 2 \pi \frac{(k^{\prime} - k - \bar{m} N) (l^{\prime} - l - m N)}{MN}} e^{j 2 \pi \frac{(k^{\prime} - k - \bar{m} N) (l + m N)}{MN}}
    \end{multline}
\end{figure*}
{
\appendices
\section{Proof of Proposition \ref{Sec2_proposition_effective_channel_expression}}\label{apdx_effective_channel}
The $(k^{\prime}M + l^{\prime}, kM + l)$-th element in $\mathbf{H}_{\operatorname{eff}}$ of \eqref{Sec2_discrete_H_matrix} at the $q$-th path is given in \eqref{apdx_q_th_eff_matrix_kl}, where $\mathcal{X}_{\tilde{B}}(l, k)$ is the discrete periodic ambiguity function defined in \eqref{Sec2_discrete_periodic_AF_B}. Since we have the equation in \eqref{apdx_AF_freq_window}, by using the result of Poisson summation, i.e.,
\begin{equation}
    \sum_{m \in \mathbb{Z}} e^{-j 2 \pi m T (f - \frac{k}{NT})} = T \sum_{m \in \mathbb{Z}} \delta(f - \frac{k}{NT} - m \Delta f),
\end{equation}
and denoting $\bar{m} = m - m^{\prime}$, equation \eqref{apdx_AF_freq_window} becomes
\begin{multline}
    \sum_{m \in \mathbb{Z}} e^{j 2 \pi \frac{k m}{N}} \mathcal{Y}_{A}(\tau_{l^{\prime}} - \tau_{l} - \tau_{q} - m T, -(\nu_{k^{\prime}} - \nu_{k} - \bar{m} \Delta f))\\
    =\sum_{m \in \mathbb{Z}} A(\nu_{k} + m \Delta f) A^{*}((m - \bar{m}) \Delta f + \nu_{k^{\prime}}) e^{j 2 \pi \frac{(l^{\prime} - l - l_{\tau_{q}}) (m N + k)}{M N}},\nonumber
\end{multline}
which further gives the result in \eqref{apdx_AF_frew_window_3}. By substituting \eqref{apdx_AF_frew_window_3} into \eqref{apdx_q_th_eff_matrix_kl}, and summing $q$ form $1$ to $Q$, it is clear that Proposition \ref{Sec2_proposition_effective_channel_expression} holds.
}

\bibliographystyle{IEEEtran}
\bibliography{refs}
\vfill
\end{document}